\newtheorem{theorem}{Theorem}[section]
\newtheorem{definition}{Definition}[section]
\newtheorem{example}{Example}[section]
\newtheorem{lemma}{Lemma}[section]
\newtheorem{proposition}{Proposition}[section]
\begin{document}

\date{}
\title{Axiomatizations for the Shapley-Shubik power index for games with several levels of approval in the input and output}

\author{Sascha Kurz\footnote{Department of Mathematics, Physics and
Computer Science, University of Bayreuth, 95440 Bayreuth, Germany
Tel.: +49-921-557353   Fax: +49-921-557352.  E--mail: sascha.kurz@uni-bayreuth.de}, \,
Issofa Moyouwou\footnote{
Advanced Teachers Training College, University of Yaounde I, PO Box 47 Yaounde, Cameroon},\,
and Hilaire Touyem\footnote{Research and Training Unit for Doctorate in Mathematics, Computer Sciences and Applications, University of Yaounde I, PO Box 812, Yaounde, Cameroon}
}

\maketitle

\begin{abstract}
  The Shapley-Shubik index is a specialization of the Shapley value and is widely applied to evaluate the power distribution
  in committees drawing binary decisions. It was generalized to decisions with more than two levels of approval both in the input
  and the output. The corresponding games are called $(j,k)$~simple games. Here we present a new axiomatization for the
  Shapley-Shubik index for $(j,k)$~simple games as well as for a continuous variant, which may be considered as the limit case.

  \vskip 3mm \noindent \textit{Key words}: simple games, several levels of approval, Shapley-Shubik index, power indices, axiomatization, interval decisions
  \vskip 3mm

  \noindent {\it Math. Subj. Class. (2010)}: Primary 91A40,  91B12; Secondary 91A80, 91A12.

  \noindent {\it JEL Class.}: C71, D70, D71.
\end{abstract}

\section{Introduction}
\label{sec_introduction}

In \cite{shapley1953value} Shapley introduced a function that could be interpreted as the expected utility of a game from each of its positions via the
axiomatic approach -- the so-called Shapley value. A bit later, see \cite{shapley1954method}, it was restricted to games with binary decisions, i.e., simple games.
An axiomatization of this so-called Shapley-Shubik index was given quite a few years later by Dubey \cite{dubey1975uniqueness}. Nowadays, the Shapley-Shubik
index is one of the most established power indices for committees drawing binary decisions. However, not all decisions are binary. Abstaining from a vote might
be seen as a third option for the committee members. In general, there might also be any number $j\ge 2$ of alternatives that can be chosen from. To this end,
simple games were generalized to $(j,k)$ simple games \cite{freixas2003weighted}, where $j$ is the number of alternatives in the input, i.e., the voting possibilities,
and $k$ the number of alternatives for the group decision. A Shapley-Shubik index for these $(j,k)$ simple games was introduced in \cite{freixas2005shapley}
generalizing earlier attempts for special cases, see e.g.~\cite[pp. 291--293]{felsenthal1998measurement}. However, also other variants have been introduced
in the literature, see e.g.~\cite{amer1998extension,friedman2018conditional,hsiao1993shapley}. Here, we will only consider the variant from  \cite{freixas2005shapley}.
A corresponding axiomatizations is given in \cite{freixas2019value}.

If we normalize the input and output levels to numbers between zero and one, we can consider the limit if $j$ and $k$ tend to infinity for $(j,k)$ simple games.
More precisely we can consider the input levels $i/(j-1)$ for $0\le i\le j-1$ and the output levels $i/(k-1)$ for $0\le i\le k-1$. Then those games are discrete
approximations for games with input and output levels freely chosen from the real interval $[0,1]$. The later games were called simple aggregation functions in
\cite{kurz2018importance}, linking to the literature on aggregation functions \cite{grabischaggregation}, and interval simple games in \cite{kurz2019axiomatization}.
A Shapley-Shubik like index for those games was motivated and introduced in \cite{kurz2014measuring}, an axiomatization is given in \cite{kurz2019axiomatization}.

The success story of the Shapley-Shubik index for simple games, initiated by \cite{shapley1953value} and \cite{shapley1954method}, triggered a huge amount of
modifications and generalizations to different types of games, see e.g.\ \cite{algaba2019handbook} for some current research directions. We think that the
variants from \cite{freixas2005shapley}, for $(j,k)$-simple games, and from \cite{kurz2014measuring}, for interval simple games, form one consistent way
to generalize the Shapley-Shubik index for simple games. Here we mainly focus on an axiomatic justification, see our main result in Theorem~\ref{thm_charact_SSI_j_k}.  
Moreover, we present another formula for the Shapley-Shubik index for $(j,k)$ simple games which is better suited for computation issues, see Lemma~\ref{lemma_simplified_formula} 
and Theorem~\ref{thm_aver_form_SSI}. For a generalization of the Banzhaf index a similar result was obtained in \cite{pongou2012revenue}. 
As the title of the preface of \cite{algaba2019handbook} names it, the idea of the Shapley value is the root of
a still ongoing research agenda.

The remaining part of this paper is organized as follows. In Section~\ref{sec_preliminaries} we introduce the necessary preliminaries and present the first few
basic results. A Shapley-Shubik index $\Phi$ for general $(j,k)$ simple games is introduced in Section~\ref{sec_SSI_j_k}. Moreover, we study the first basic
properties of $\Phi$. In Section~\ref{sec_average_game} we introduce the average game, which is a TU game associated to each $(j,k)$ simple game. This notion
is then used to formulated the new axiom of average convexity, which culminates in an axiomatic characterization of $\Phi$ in Section~\ref{sec_axiomatization_j_k_1}.
In Section~\ref{sec_axiomatization_interval_simple_games} we transfer all notions and the axiomatic characterization to interval simple games.

\section{Preliminaries}
\label{sec_preliminaries}
Let $N=\left\{ 1,2,...,n\right\}$ be a finite set of voters. Any subset $S$ of $N$ is called a coalition and the set of all coalitions of $N$ is denoted by
the power set $2^{N}$. For given integers $j,k\ge 2$ we denote by $J=\{0,\dots,j-1\}$ the possible input levels and by $K=\{0,\dots, k-1\}$ the possible
output levels, respectively. We write $x\le y$ for $x,y\in\mathbb{R}^n$ if
$x_i\le y_i$ for all $1\le i\le n$. For each $\emptyset\subseteq S\subseteq N$ we write $x_S$ for the restriction
of $x\in\mathbb{R}^n$ to $\left(x_i\right)_{i\in S}$. As an abbreviation, we write $x_{-S}=x_{N\backslash S}$. Instead of $x_{\{i\}}$ and
$x_{-\{i\}}$ we write $x_i$ and $x_{-i}$, respectively. Slightly abusing notation we write $\mathbf{a}\in\mathbb{R}^n$, for the vector that
entirely consists of $a$'s, i.e., $\mathbf{0}$ for the zero vector.

A \emph{simple game} with player set $N$ is a mapping $v\colon 2^N\to \{0,1\}$ with $v(\emptyset)=1$, $v(N)=1$, and $v(S)\le v(T)$ for all
$\emptyset\subseteq S\subseteq T\subseteq N$. Coalitions $S\subseteq N$ with $v(S)=1$ are called \emph{winning} and \emph{losing} otherwise.
The interpretation in the voting context is as follows. Those elements $i\in N$, called \emph{voters} or \emph{players}, that are contained
in a coalition $S$ are those that are in favor of a certain proposal. The other voters, i.e., those in $N\backslash S$, are against the proposal.
If $v(S)=1$ then the proposal is implemented and otherwise the status quo persists. A simple
game $v$ is \emph{weighted} if there exists a \emph{quota} $q\in\mathbb{R}_{>0}$ and \emph{weights} $w_i\in\mathbb{R}_{\ge 0}$ for all $i\in N$ such
that $v(S)=1$ iff $w(S):=\sum_{i\in S} w_i\ge q$. As notation we use $\left[q;w_1,\dots w_n\right]$ for a weighted (simple) game. An example is
given by $v=[4;3,2,1,1]$ with winning coalitions $\{1,2\}$, $\{1,3\}$, $\{1,4\}$, $\{1,2,3\}$, $\{1,2,4\}$, $\{1,3,4\}$, $\{1,2,3,4\}$, and $\{2,3,4\}$.
A simple game $v$ is a \emph{unanimity game} if there exists a coalition $\emptyset\neq T\subseteq N$ such that $v(S)=1$ iff $T\subseteq S$. As
an abbreviation we use the notation $\gamma_T$ for a unanimity game with defining coalition $T$. It is well known
that each simple game admits a representation as disjunctions of a finite list of unanimity games. Calling a winning coalition \emph{minimal} if all
proper subsets are losing, such a list is given by the minimal winning coalitions, i.e., by $\{1,2\}$, $\{1,3\}$, $\{1,4\}$, and $\{2,3,4\}$ in the above
example.

If being part of a coalition is modeled as voting {\lq\lq}yes{\rq\rq} and {\lq\lq}no{\rq\rq} otherwise, represented as $1$ and $0$, respectively, then
one can easily reformulate and generalize the definition of a simple game:
\begin{definition}
  \label{def_j_k_simple_game}
  A \emph{$(j,k)$ simple game for $n$ players}, where $j,k\ge 2$ and $n\ge 1$ are integers, is a mapping $v\colon J^n\to K$ with $v(\mathbf{0})=0$,
  $v(\mathbf{j-1})=k-1$, and $v(x)\le v(y)$ for all $x,y\in J^n$ with $x\le y$. The set of all $(j, k)$ simple games on $N$ is denoted by $\mathcal{U}_n^{j,k}$
  or by $\mathcal{U}_n$, whenever $j$ and $k$ are clear from the context.
\end{definition}
So, $(2,2)$ simple games are in one-to-one correspondence to simple games.
We use the usual ordering of $J$ (and $K$) as a set of integers, i.e., $0<  1< \cdots< j-1$. In words, in the input set, $0$ is the lowest level
of approval, followed by $ 1$ and so on. In general, we call a function $f\colon\mathbb{R}^n\supseteq U\to \mathbb{R}$ \emph{monotone} if we have
$f(x)\le f(y)$ for all $x,y\in U$ with $x\le y$. We remark that in \cite{freixas2005shapley} the author considers a more general definition of a $(j,k)$
simple game than we have presented here. Additionally the $j$ input levels and the $k$ output levels are given by a so-called numeric evaluation.
Our case is called uniform numeric evaluation there, which motivated the notation $\mathcal{U}_n$ for $(j,k)$ simple games for $n$ players. We also
call a vector $x\in J^n$ a \emph{profile}.

\begin{definition}
  \label{def_null_player_j_k}
  Given a $(j,k)$ simple game $v$ with player set $N$, we call a player $i\in N$ a \emph{null} player if $v(x)=v(x_{-i},y_i)$ for all $x\in J^n$ and all
  $y_i\in J$. Two players $i,h\in N$ are called \emph{equivalent} if $v(x)=v(x')$ for all $x,x'\in J^n$ with $x_l=x'_l$ for all $l\in N\backslash \{i,h\}$,
  $x_i=x'_h$, and $x_h=x'_i$.
\end{definition}

In words, a player $i$ is a null player if its input $x_i$ does not alter the output $v(x)$. If interchanging the input $x_i$ and $x_h$ of two players does never
alter the output $v(x)$, then players $i$ and $h$ are equivalent. By $\pi_{ih}$ we denote the transposition on $N$ interchanging $i$ and $h$, so that the
previous condition reads $v(x)=v(\pi_{ih} x)$ for all $x\in J^n$.  By $\mathcal{S}_n$ we denote the set of permutations of length $n$, i.e., the bijections on $N$.

Now let us introduce a subclass of $(j,k)$ simple games with the property that for each profile $x$, the collective decision $v(x)$ is either $0$
\emph{(the lowest level of approval)} or it is $k-1$ \emph{(the highest level of approval)} depending on whether some given voters report some minimum approval
levels. For example, when any full support of the proposal necessitates a full support of each voter in a given coalition $S$, players in $S$ are each empowered
with a veto. One may require from each player in $S$ only a certain level of approval for a full support of the proposal. All such games will be called $(j,k)$
simple games with point-veto.

\begin{definition}
  \label{def_point_veto}
  A $(j,k)$ simple game with a \emph{point-veto} is a $(j,k)$ simple game $v$ such that there exists some $a\in J^n\backslash \{\mathbf{0}\}$ satisfying
  $v(x)=k-1$ if $a\leq x$ and $v(x)=0$ otherwise for all $x\in J^n$. In this case, $a$ is the veto and the game $v$ is denoted by $u^a$. For each coalition
  $S\in 2^N$ we abbreviate $w^S=u^a$, where $a_i=j-1$ for all $i\in S$ and $a_i=0$ otherwise.
\end{definition}

We remark that $(2,2)$ simple games with a point veto are in one-to-one correspondence to the subclass of unanimity games within simple games. The set of all players
who report a non-null approval level is denoted by $N^a$, i.e., $N^a=\left\{i\in N:0<a_i\leq j-1\right\}$. Every player in $N^a$ will be called a \emph{vetoer} of
the game $u^a$. Note that for the vector $a$ defined via $w^S=u^a$ we have $N^a=S$.

Null players as well as equivalent players can be identified easily in a given $(j,k)$ simple game with point-veto:
\begin{proposition}
  \label{prop_null_and_equivalent_players}
  Let $a\in J^n\backslash\{\mathbf{0}\}$. A player $i\in N$ is a null player of $u^a$ iff $i\in N\backslash N^a$.
  Two players $i,h\in N$ are equivalent in $u^a$ iff $a_i=a_h$.
\end{proposition}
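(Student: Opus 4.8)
The plan is to prove each of the two biconditionals by verifying one direction directly from the defining condition of $u^a$ and establishing the other by contraposition through an explicit witness profile, with the profile $x=a$ itself doing essentially all of the work.

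For the null-player claim I would first treat the easy direction: if $i\in N\setminus N^a$ then $a_i=0$, so the constraint $a_i\le x_i$ appearing in the defining condition $a\le x$ is vacuous (as $x_i\ge 0$ always). Hence replacing $x_i$ by any $y_i\in J$ leaves the truth value of $a\le x$ unchanged, and therefore $u^a(x)=u^a(x_{-i},y_i)$ for every $x\in J^n$, so $i$ is a null player. For the converse I argue by contraposition: if $i\in N^a$ then $a_i>0$, and I test on $x=a$, where $a\le a$ forces $u^a(a)=k-1$, against $(a_{-i},0)$, whose $i$th coordinate satisfies $0<a_i$ and thus violates $a\le(a_{-i},0)$, yielding output $0$. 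Since $k\ge 2$ these two values differ, so $i$ is not a null player.

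For the equivalence claim the key observation is that the predicate $a\le \pi_{ih}x$ differs from $a\le x$ only through the two constraints at coordinates $i$ and $h$: the former requires $a_i\le x_h$ and $a_h\le x_i$, while the latter requires $a_i\le x_i$ and $a_h\le x_h$. When $a_i=a_h$ these two pairs of inequalities are logically identical, so $a\le x\iff a\le\pi_{ih}x$, and hence $u^a(x)=u^a(\pi_{ih}x)$ for all $x$, giving equivalence. For the converse I again use $x=a$: then $a\le a$ yields $u^a(a)=k-1$, whereas $a\le \pi_{ih}a$ would demand both $a_i\le a_h$ and $a_h\le a_i$, i.e.\ $a_i=a_h$; so if $a_i\neq a_h$ this condition fails and $u^a(\pi_{ih}a)=0\neq k-1$, witnessing non-equivalence.

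There is no serious obstacle here, since $u^a$ is a threshold game and everything reduces to inspecting the coordinatewise inequalities defining $a\le x$. The only point deserving a little care is the converse of the equivalence claim, where one must notice that the single test profile $x=a$ simultaneously handles both the case $a_i<a_h$ and the case $a_i>a_h$, because the requirement $a\le\pi_{ih}a$ forces equality of the two thresholds regardless of which is the larger.
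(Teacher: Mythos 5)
Your proof is correct and follows essentially the same route as the paper: the easy directions are read off from the coordinatewise inequalities defining $a\le x$, and both converses use the witness profile $x=a$ (compared against $(a_{-i},\mathbf{0}_i)$, respectively $\pi_{ih}a$), exactly as in the paper's argument, merely phrased contrapositively.
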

\begin{proof}
  For every $a\in J^n\backslash\{\mathbf{0}\}$ and every $i\in N\backslash N^a$ we have $a_i=0$ by the definition of $N^a$. Now let $i\in N\backslash N^a$. For
  every $x\in J^n$ and every $y_i\in J$ we have $a\leq x$ iff $a\leq \left(x_{-i},\, y_i\right)$. Thus, $u^a(x)=u^a\left(x_{-i},\, y_i\right)$ and $i$ is a null player in $u^a$.
  Now let $i\in N^a$, i.e., $a_i>0$. Since $v(a)=k-1\neq 0=v\left(a_{-i},\mathbf{0}_i\right)$, player $i$ is not a null player in $u^a$.

  Assume that $a_i=a_h$ and consider an arbitrary $x\in J^n$. Then we have $a\leq x$ if and only if $a\leq \pi_{ih}x$. The definition of $u^a$ directly gives $u^a(x)=u^a(\pi_{ih}x)$,
  so that the players $i$ and $h$ are equivalent in $u^a$. Now suppose that the players $i$ and $h$ are equivalent in $u^a$. Since  $a\leq a$, we obtain $u^a(a)=u^a(\pi_{ih}a)=k-1$.
  This implies that $a\leq \pi_{ih}a$.  Therefore $a_i\leq a_h$ and $a_{h}\leq a_i$, that is $a_i=a_h$.
\end{proof}

Note that $(j,k)$ simple games can be combined using the disjunction $(\vee)$ or the conjunction $(\wedge)$ operations to obtain new games.
\begin{definition}
  \label{def_vee_and wedge}
  Let $v'$ and $v''$ be two $(j,k)$-simple games with player set $N$. By $v' \vee v''$ we denote the $(j,k)$ simple game $v$ defined by $v(x)=\max\{v'(x),v''(x)\}$ for all
  $x\in J^n$. Similarly, by $v'\wedge v''$ we denote the $(j,k)$ simple game $v$ defined by $v(x)=\min\{v'(x),v''(x)\}$ for all $x\in J^n$.
\end{definition}
We remark that the defining properties of a $(j,k)$ simple game can be easily checked. This can be specialized to the subclass of $(j,k)$ simple games with point veto,
i.e., $(j,k)$ simple games with point-veto can be combined using the disjunction $(\vee)$ or the conjunction $(\wedge)$ operations to obtain new games. To see this, consider a
non-empty subset $E$ of $J^n\backslash\{\mathbf{0}\}$ and define the $(j,k)$ simple game denoted by $u^E$ by $u^E(x)=k-1$ if $a\leq x$ for some $a\in E$ and $u^E(x)=0$ otherwise,
where $x\in J^n$ is arbitrary. Note that the notational simplification $u^{\{a\}}=u^a$, where $a\in J^n\backslash \{\mathbf{0}\}$, goes in line
with Definition~\ref{def_point_veto}.

\begin{proposition}
  \label{prop_vee_and wedge}
  Let $E$ and $E'$ be two non-empty subsets of $J^n\backslash\{\mathbf{0}\}$. Then, we have  $u^E \vee u^{E'}=u^{E\cup E'}$ and $u^E\wedge u^{E'}=u^{E''}$, where
  $E''=\{c\in J^n: c_i=max(a_i,b_i) \text{ for some } a\in E \text{ and } b\in E'\}$.
\end{proposition}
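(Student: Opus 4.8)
The plan is to evaluate both identities at an arbitrary profile $x\in J^n$ and exploit the fact that each of the games $u^E$, $u^{E'}$, $u^{E\cup E'}$, and $u^{E''}$ takes only the two values $0$ and $k-1$. Consequently it suffices to determine, for every $x$, precisely which profiles send each side to the value $k-1$; agreement there forces agreement everywhere.

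First I would treat the disjunction. By Definition~\ref{def_vee_and wedge} we have $\left(u^E\vee u^{E'}\right)(x)=\max\{u^E(x),u^{E'}(x)\}$, and since both arguments lie in $\{0,k-1\}$, this maximum equals $k-1$ exactly when $u^E(x)=k-1$ or $u^{E'}(x)=k-1$, that is, when there is some $a\in E$ with $a\le x$ or some $b\in E'$ with $b\le x$. This is precisely the statement that there exists $c\in E\cup E'$ with $c\le x$, which by definition characterizes $u^{E\cup E'}(x)=k-1$. Hence the two functions coincide at every $x$.

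For the conjunction, $\left(u^E\wedge u^{E'}\right)(x)=\min\{u^E(x),u^{E'}(x)\}$ equals $k-1$ exactly when \emph{both} $u^E(x)=k-1$ and $u^{E'}(x)=k-1$, i.e.\ when there exist $a\in E$ and $b\in E'$ simultaneously satisfying $a\le x$ and $b\le x$. The key observation is that for two vectors $a,b\in J^n$ one has $a\le x$ and $b\le x$ if and only if $c\le x$, where $c\in J^n$ is the componentwise maximum given by $c_i=\max(a_i,b_i)$; this is immediate from the definition of the partial order $\le$ on $\mathbb{R}^n$. Therefore the condition becomes: there exists $c\in E''$ with $c\le x$, which is exactly $u^{E''}(x)=k-1$, and the two functions again agree everywhere.

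The only remaining point requiring a moment of care is to confirm that $u^{E''}$ is well-defined, i.e.\ that $E''$ is a non-empty subset of $J^n\backslash\{\mathbf{0}\}$, so that Definition~\ref{def_point_veto} applies. It is non-empty because $E$ and $E'$ are; each $c=\max(a,b)$ lies in $J^n$ since $\max(a_i,b_i)\in\{0,\dots,j-1\}$; and $c\neq\mathbf{0}$ because $a\neq\mathbf{0}$ forces some coordinate $c_i\ge a_i>0$. I do not expect a genuine obstacle in this proof; the componentwise-maximum equivalence underlying the conjunction case is the one step worth stating explicitly, but it follows directly from the definitions.
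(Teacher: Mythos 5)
Your proof is correct and follows essentially the same route as the paper's: evaluate both sides at an arbitrary profile $x\in J^n$, use the fact that all games involved are $\{0,k-1\}$-valued, and reduce the conjunction case to the observation that $a\le x$ and $b\le x$ hold simultaneously if and only if the componentwise maximum $c=\max(a,b)$ satisfies $c\le x$. Your added check that $E''$ is a non-empty subset of $J^n\backslash\{\mathbf{0}\}$ (so that $u^{E''}$ is well-defined) is a small point of care the paper omits, but otherwise the arguments coincide.
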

\begin{proof}
  In order to prove $u^E \vee u^{E'}=u^{E\cup E'}$ we consider an arbitrary $x\in J^n$. If $u^{E\cup E'}(x)=k-1$, then  there exists $a\in E\cup E'$ such that, $a\leq x$.
  Therefore $u^E(x)=k-1$ or $u^{E'}(x)=k-1$  and  $(u^E \vee u^{E'})(x)=k-1$. Now suppose that $u^{E\cup E'}(x)=0$. Then, for all $a\in E\cup E'$ we have $a\nleq x$. Since
  $E\subseteq E\cup E'$ and $E'\subseteq E\cup E'$ we have $b\nleq x$ and $c\nleq x$ for all $b\in E$ and all $c\in E'$. This implies that $u^E(x)=u^{E'}(x)=0$ and
  $(u^E \vee u^{E'})(x)=0$. Thus, $u^E \vee u^{E'}=u^{E\cup E'}$.

  Similarly, in order to prove $u^E\wedge u^{E'}=u^{E''}$ we consider an arbitrary $x\in J^n$. If $u^{E''}(x)=k-1$, then there exists $c\in E''$ such that $c\leq x$. But,
  by definition of $E''$, $c=\max (a, b)$ for some $a\in E$ and $b\in E'$, that is $a\leq c\leq x$ and $b\leq c\leq x$. Hence, $u^E(x)=u^{E'}(x)=k-1$ and
  $(u^E\wedge u^{E'})(x)=k-1$. Now assume that $u^{E''}(x)=0$ and $(u^E\wedge u^{E'})(x)\neq 0$. By definition of $u^E$ and $u^{E'}$, we have $(u^E\wedge u^{E'})(x)=k-1$.
  Thus, there exists $a\in E$ and $b\in E'$ such that $a\leq x$ and $b\leq x$. It follows that $c=\max (a, b)\leq x$, which is a contradiction to $u^{E''}(x)=0$. This proves
  $u^E\wedge u^{E'}=u^{E''}$.
\end{proof}

For $(j,k)=(5,3)$ and $n=3$ an example is given by $E=\left\{(1,2,3),(2,1,2)\right\}$, $E'=\left\{(4,1,1),(1,1,3)\right\}$. With this, $E''=\left\{(4,2,3),(1,2,3),(2,1,3),(4,1,2)\right\}$.
Note that we may remove $(4,2,3)$ from that list since $(4,2,3)\ge (1,2,3)$ (or $(4,2,3)\ge (4,1,2)$).

Especially, Proposition~\ref{prop_vee_and wedge} yields that every $(j,k)$ simple game of the form $u^E$ is a disjunction of some $(j,k)$ simple games with point-veto. So, each
$(j,k)$ simple game of the form
$u^E$ will be called a $(j,k)$ simple game with \emph{veto}. In the game $u^E$, $E$ can be viewed as some minimum requirements (or thresholds) on the approval levels of voters'
inputs for the full support of the proposal. It is worth noticing that $u^E$ is $\{0, k-1\}$-valued; the final decision at all profiles is either a no-support or a full-support.
The set of all veto $(j,k)$ simple games on $N$ is denoted $\mathcal{V}_n$. Note that Proposition~\ref{prop_vee_and wedge} shows that $\mathcal{V}_n$ is a lattice.

The sum of two $(j,k)$ simple games cannot be  a $(j,k)$ simple game itself. However, we will show that each $(j,k)$ simple game is a convex combination of $(j,k)$ simple games with veto.

\begin{definition}
  A convex combination of the games $v_1,v_2,\dots,v_p\in  \mathcal{U}_n$ is given by $v=\sum_{t=1}^p\alpha_tv_t$ for some non-negative numbers $\alpha_t$, where $t=1,2,\dots,p$,
  that sum to $1$.
\end{definition}
Note that not all convex combinations of $(j,k)$ simple games are $(j,k)$ simple games.

\begin{proposition}
  \label{prop_conv_decomp_j_k}
  For each $(j,k)$ simple game $v$ there exist  a collection of positive numbers $\alpha_t$, where $t=1,2,\dots,p$, that sum to $1$ and a collection $F_t(v)$, where $t=1,2,\dots,p$,
  of non-empty subsets of $J^n$ such that $v=\sum_{t=1}^p\alpha_tu^{F_t(v)}$.
\end{proposition}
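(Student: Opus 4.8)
The plan is to decompose $v$ level by level via a layer-cake (telescoping) representation, expressing each output level as a scaled veto game. For every output level $\ell\in\{1,\dots,k-1\}$ I would consider the upper level set
\[
W_\ell=\{x\in J^n:\ v(x)\ge \ell\}.
\]
Monotonicity of $v$ guarantees that $W_\ell$ is closed upwards, that is, $x\in W_\ell$ and $x\le y$ imply $y\in W_\ell$. Since $v(\mathbf{0})=0<\ell$ we have $\mathbf{0}\notin W_\ell$, while $v(\mathbf{j-1})=k-1\ge \ell$ gives $\mathbf{j-1}\in W_\ell$; hence each $W_\ell$ is a non-empty upper set that does not contain $\mathbf{0}$.

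Next I would let $F_\ell(v)$ denote the set of minimal elements of $W_\ell$ with respect to the componentwise order $\le$ on $J^n$. Because $J^n$ is finite and $W_\ell$ is non-empty, $F_\ell(v)$ is a non-empty subset of $J^n\backslash\{\mathbf{0}\}$, so $u^{F_\ell(v)}$ is a genuine veto game in the sense of Definition~\ref{def_point_veto}. The crucial step is to verify that $u^{F_\ell(v)}$ recovers (a scaled version of) the characteristic function of $W_\ell$, namely
\[
u^{F_\ell(v)}(x)=
\begin{cases}
k-1 & \text{if } v(x)\ge \ell,\\[2pt]
0 & \text{otherwise,}
\end{cases}
\]
for all $x\in J^n$. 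By the definition of $u^{F_\ell(v)}$ this reduces to the claim that $x\in W_\ell$ if and only if $a\le x$ for some $a\in F_\ell(v)$. The ``if'' direction is immediate from upward closure of $W_\ell$, whereas the ``only if'' direction is the standard order-theoretic fact that in a finite poset every element of an upper set dominates at least one minimal element of that set.

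Finally I would assemble the decomposition from the elementary observation that, since $v(x)\in\{0,\dots,k-1\}$, the value $v(x)$ equals the number of levels $\ell\in\{1,\dots,k-1\}$ with $v(x)\ge \ell$. Dividing each layer by $k-1$ and inserting the displayed formula for $u^{F_\ell(v)}$ yields
\[
v(x)=\sum_{\ell=1}^{k-1}\frac{1}{k-1}\,u^{F_\ell(v)}(x)\qquad\text{for all } x\in J^n,
\]
so the statement holds with $p=k-1$, with $\alpha_t=\tfrac{1}{k-1}>0$ for each $t$ (these clearly sum to $1$), and with the sets $F_t(v)$ taken to be the $F_\ell(v)$ above. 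The only genuinely non-routine point is the minimal-element characterization of $W_\ell$ used in the second step; the boundary conditions $v(\mathbf{0})=0$ and $v(\mathbf{j-1})=k-1$ together with monotonicity handle non-emptiness and the exclusion of $\mathbf{0}$, and the remaining computation is just the telescoping sum.
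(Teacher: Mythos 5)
Your proof is correct, and it takes a genuinely different route from the paper's. You slice the game horizontally by output level: for each $\ell\in\{1,\dots,k-1\}$ you form the upper level set $W_\ell=\{x\in J^n: v(x)\ge \ell\}$, represent it by its non-empty antichain of minimal elements $F_\ell(v)\subseteq J^n\backslash\{\mathbf{0}\}$, and conclude $v=\sum_{\ell=1}^{k-1}\tfrac{1}{k-1}\,u^{F_\ell(v)}$ because $v(x)$ counts exactly the levels it clears; the supporting steps (upward closure, the fact that in a finite poset every element of an upper set dominates a minimal element, the counting identity) are all sound. The paper instead enumerates the support $\mathcal{F}(v)=\{x\in J^n: v(x)>0\}$ as $x^1,\dots,x^p$ along a linear extension compatible with both the componentwise order and the values, takes the nested suffix sets $F_t(v)=\{x^s: t\le s\le p\}$ as veto sets, and uses the increments $\alpha_t=\bigl(v(x^t)-v(x^{t-1})\bigr)/(k-1)$ as coefficients, verifying the identity by telescoping along the enumeration. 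Your decomposition buys economy and exactness: only $k-1$ summands (versus up to $j^n$ in the paper's construction), with all coefficients equal to $1/(k-1)$ and hence strictly positive as the statement literally demands--whereas the paper's increments are only shown to be non-negative (consecutive profiles of equal value give $\alpha_t=0$), so its proof as written needs zero terms discarded to match the claim. The paper's construction, in turn, yields a nested (decreasing) family of veto sets, a structural feature that can be convenient elsewhere, and requires nothing beyond choosing the enumeration.
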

\begin{proof}
  Let $v\in \mathcal{U}_n$ and $\mathcal{F}(v)=\{x\in J^n, v(x)> 0\}$. Since $J^n$ is finite and $v$ is monotone, the elements of $\mathcal{F}(v)$ can be labeled in such a way that
  $\mathcal{F}(v)=\{x^1, x^2,\dots, x^p\}$, where $x^p=\mathbf{1}$, $v(x^t)\leq v(x^{t+1})$ for all $1\leq t< p$, and $t\leq s$ whenever $x^t\leq x^s$. Now, set $x^0=\mathbf{0}$ and
  ${F}_t(v)=\{x^s,\, t\leq s\leq p\}$, $\alpha_t=\frac{v(x^t)-v(x^{t-1})}{k-1}$ for all $1\leq t\leq p$. By our assumption on $x^t$ we have $\alpha_t\geq0$ for all $1\leq t\leq p$. Moreover,
  it can be easily checked that $\sum_{t=1}^p \alpha_t=\frac{v(x^p)-v(x^0)}{k-1}=1$. set $u=\sum_{t=1}^p\alpha_tu^{F_t(v)}$.

  In order to prove that $v=u$, we consider an arbitrary $x\in J^n$. First suppose that $x\notin \mathcal{F}(v)$. Since $v$ is monotone, there is no $a\in \mathcal{F}(v)$ such that $a\leq x$.
  By definition, it follows that $v^{F_t(v)}(x)=0$ for all $t=1,2,\dots,p$. Therefore $v(x)=u(x)=0$. Now suppose that $x\in \mathcal{F}(v)$. Then $x=x^s$ for some $s=1,2,\dots,p$.  It follows
  that for all $t=1,2,\dots,p$ we have $v^{F_t(v)}(x)=k-1$ if $1\leq t\leq s$ and $v^{F_t(v)}(x)=0$ otherwise. Therefore
  $$u(x)=\sum_{t=1}^s\alpha_t=\sum_{t=1}^s\left(\frac{v(x^t)-v(x^{t-1})}{k-1}\cdot (k-1)\right)=v(x^s)=v(x).$$
  Clearly, the game $v$ is a convex combination of the games $u^{F_t(v)}$, where $t=1,2,\dots,p$.
\end{proof}

Proposition \ref{prop_conv_decomp_j_k} underlines the importance of $(j,k)$ simple games with veto, i.e., every $(j,k)$ simple game can be obtained from $(j,k)$ simple games with
veto as a convex combination.

\medskip

Now let us consider a continuous version of $(j,k)$ simple games normalized to the real interval $I:=[0,1]$ for the input as well as the output levels. Following
\cite{kurz2018importance} and using the name from \cite{kurz2019axiomatization}, we call a mapping $v\colon[0,1]^n\to[0,1]$ an \emph{interval simple game} if
$v(\mathbf{0})=0$, $v(\mathbf{1})=1$, and $v(x)\le v(y)$ for all $x,y\in [0,1]^n$ with $x\le y$. Replacing $J$ by $[0,1]$ in Definition~\ref{def_null_player_j_k}
we can transfer the concept of a null player and that of equivalent players to interval simple games.

\section{The Shapley-Shubik index for simple and $(j, k)$ simple games}
\label{sec_SSI_j_k}

Since in a typical simple game $v$ not all players are equivalent, the question of influence of a single player $i$ on the final group decision $v(S)$ arises.
Even if $v$ can be represented as a weighted game, i.e., $v=[q;w]$, the relative individual influence is not always reasonably reflected by
the weights $w_i$. This fact is well-known and triggered the invention of power indices, i.e., mappings from a simple game on
$n$ players to $\mathbb{R}^n$ reflecting the influence of a player on the final group decision. One of the most established power indices
is the \emph{Shapley-Shubik index} \cite{shapley1954method}. It can be defined via
\begin{equation}
  \label{eq_ssi_simple_games}
  \operatorname{SSI}_{i}(v)=\sum_{i\in S\subseteq N}\frac{(s-1)!(n-s)!}{n!}\cdot \left[
  v(S)-v(S\backslash \left\{ i\right\} )\right]
\end{equation}
for all players $i\in N$, where $s=|S|$. If $v(S)-v(S\backslash\{i\})=1$, then we have $v(S)=1$ and $v(S\backslash\{i\})=0$ in a simple game
and voter~$i$ is called a \emph{swing voter}.

In \cite{shapley1954method} the authors have motivated the Shapley-Shubik index by the following interpretation. Assume that
the $n$ voters row up in a line and declare to be part in the coalition of {\lq\lq}yes{\rq\rq}-voters. Given an ordering of the players,
the player that first guarantees that a proposal can be put through is then called \emph{pivotal}. Considering all $n!$ orderings $\pi\in\mathcal{S}_n$ of
the players with equal probability then gives a probability for being pivotal for a given player $i\in N$ that equals its
Shapley-Shubik index. So we can rewrite Equation~(\ref{eq_ssi_simple_games}) to
\begin{equation}
  \label{eq_roll_call_simple_games_first}
  \operatorname{SSI}_i(v)=\frac{1}{n!}\cdot \sum_{\pi\in\mathcal{S}_n} \Big( v(\{j\in N\,:\,\pi(j)\le \pi(i)\})-v(\{j\in N\,:\,\pi(j)< \pi(i)\}) \Big).
\end{equation}
Setting $S_\pi^i:=\{j\in N\,:\,\pi(j)\le \pi(i)\}$ we have $S_\pi^i=S$ for exactly $(s-1)!(n-s)!$ permutations $\pi\in\mathcal{S}_n$ and
an arbitrary set $\{i\}\subseteq S\subseteq N$, so that Equation~(\ref{eq_ssi_simple_games}) is just a simplification of
Equation~(\ref{eq_roll_call_simple_games_first}).

Instead of assuming that all players vote {\lq\lq}yes{\rq\rq} one can also assume that all players vote
{\lq\lq}no{\rq\rq}. In \cite{MannShapley} it is mentioned that the model also yields the same result if we assume
that all players independently vote {\lq\lq}yes{\rq\rq} with a fixed probability $p\in[0,1]$. This was further generalized to probability
measures $p$ on $\{0,1\}^n$ where vote vectors with the same number of {\lq\lq}yes{\rq\rq} votes have the same probability, see
\cite{hu2006asymmetric}. In other words, individual votes may be interdependent but must be exchangeable. That no further probability
measures lead to the Shapley-Shubik index was finally shown in \cite{kurz2018roll}. For the most symmetric case $p=\tfrac{1}{2}$ we can rewrite
Equation~(\ref{eq_roll_call_simple_games_first}) to
\begin{equation}
  \label{eq_roll_call_simple_games}
  \operatorname{SSI}_i(v)=\frac{1}{n!\cdot 2^n}\cdot \sum_{(\pi,x)\in\mathcal{S}_n\times\{0,1\}^n} M(v,(\pi,x),i),
\end{equation}
where $M(v,(\pi,x),i)$ is one if player $i$ is pivotal for ordering $\pi$ and vote vector $x$ in $v$, see \cite{kurz2018roll},
and zero otherwise.

This line of reasoning can be used to motivate a definition of a Shapley-Shubik index for $(j,k)$ simple games as defined in \cite{freixas2005shapley},
c.f.\ \cite{kurz2014measuring}. Suppose that voters successively and independently each choose a level of approval in $J$
with equal probability. Such a vote scenario is modeled by a roll-call $(\pi,x)$ that consists in a permutation $\pi$ of the voters and a profile $x\in J^n$ such for all $i\in N$,
the integer $\pi(i)\in \{1,2,\dots,n \}$ is the entry position of voter $i$ and $x_i$ is his approval level. Given an index $h\in\{1,\dots,k-1\}$, a voter $i$ is an $h$-pivotal voter
if the vote of player $i$, according to the ordering $\pi$ and the approval levels of his predecessors, pushes the outcome to at least $h$ or to at most $h-1$.

\begin{example}
  \label{ex_j_k_simple_game}
  Let $v$ be the $(3,3)$ simple game $v$ for $2$ players defined by $v(0, 0) =v(1,0)= 0$, $v(1, 1)=v(0, 1) = 1$, and
$v(2,0)=v(0,2)=v(2,1)=v(1,2)=v(2,2)=2$. As an example, consider the ordering $\pi=(2,1)$, i.e., player $2$ is first, and the vote vector $x=(2,1)$. Before player 2 announce his vote
$x_2=1$ all outcomes in $K=\{0,1,2\}$ are possible. After the announcement the outcome $0$ is impossible, since $v(0,1)=1$, while the outcomes $2$ and $3$ are still
possible. Thus, player $2$ is the $1$-pivotal voter. Finally, after the announcement of $x_1=2$, the outcome is determined to be $v(2,1)=2$, so that player $1$ is
the $2$-pivotal voter.
\end{example}

Going in line with the above motivation and the definition from \cite{freixas2005shapley}, the Shapley-Shubik index for $(j,k)$~simple games is defined for
all $v\in \mathcal{U}_n$ and for all $i\in N$ by:

\begin{equation}
  \label{eq_roll_call_jk_simple_games}
  \Phi_i(v)=\frac{1}{n!\cdot j^n\cdot (k-1)}\sum_{h=1}^{k-1} \left|\left\{(\pi,x)\in\mathcal{S}_n\times J^n\,:\,i\text{ is an $h$-pivot for
  $\pi$ and $x$ in $v$}\right\}\right|.
\end{equation}

Since several different definitions of a Shapley-Shubik index for $(j,k)$-simple games have been introduced in the literature, we prefer to
use the more inconspicuously notation $\Phi_i(v)$ instead of the more suggestive notation $\operatorname{SSI}_i(v)$. For the $(j,k)$ simple game $v$ from
Example~\ref{ex_j_k_simple_game} we have
$$
  \Phi(v)=\left(\Phi_1(v),\Phi_2(v)\right)=\left(\frac{5}{12}, \frac{7}{12}\right).
$$
Hereafter, some properties of $\Phi$ are explored. To achieve this, we introduce further definitions and axioms for power indices on $(j,k)$ simple games.
First of all, we simplify Equation~(\ref{eq_roll_call_jk_simple_games}) to a more handy formula.
\begin{lemma}
  \label{lemma_simplified_formula}
  For each $(j,k)$ simple game $v\in\mathcal{U}_n$ and each player $i\in N$ we have
  \begin{equation}
    \label{eq_roll_call_jk_simple_games_simplified}
    \Phi_i(v)=\sum_{i\in S\subseteq N} \frac{(s-1)!(n-s)!}{n!}\cdot\left[C(v,S)-C(v,S\backslash\{i\})\right],
  \end{equation}
  where $s=|S|$ and
  \begin{equation}
    \label{eq_def_C_function}
    C(v,T)=\frac{1}{j^n(k-1)}\cdot \sum_{x\in J^n} \Big(v(\mathbf{(j-1)}_T,x_{-T})-v(\mathbf{0}_T,x_{-T})\Big)
  \end{equation}
  for all $T\subseteq N$.
\end{lemma}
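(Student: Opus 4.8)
The plan is to unwind the roll-call definition \eqref{eq_roll_call_jk_simple_games} by tracking, along a fixed roll-call $(\pi,x)$, the smallest and largest outcomes that are still attainable as the voters announce their levels. Write $S_t=\{i\in N:\pi(i)\le t\}$ for the set of the first $t$ voters and set $m_t=v(x_{S_t},\mathbf{0}_{-S_t})$ and $M_t=v(x_{S_t},\mathbf{(j-1)}_{-S_t})$. Monotonicity of $v$ gives $0=m_0\le m_1\le\cdots\le m_n=v(x)$ and $k-1=M_0\ge M_1\ge\cdots\ge M_n=v(x)$, since passing from $S_{t-1}$ to $S_t$ replaces a $0$-entry by $x_i\ge 0$ (resp.\ a $(j-1)$-entry by $x_i\le j-1$). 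For a fixed threshold $h\in\{1,\dots,k-1\}$ the outcome is forced to be at least $h$ precisely at the first step where $m_t\ge h$, and forced to be at most $h-1$ precisely at the first step where $M_t<h$; exactly one of these events occurs (according to whether $v(x)\ge h$), and the voter entering at that step is the $h$-pivot. Hence, for the voter $i$ entering at position $t=\pi(i)$ and with $S:=S^i_\pi$, the number of thresholds $h$ for which $i$ is the $h$-pivot equals the number of integers in $(m_{t-1},m_t]$ plus the number in $(M_t,M_{t-1}]$, that is $(m_t-m_{t-1})+(M_{t-1}-M_t)$.

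Next I would substitute this count into \eqref{eq_roll_call_jk_simple_games} and reorganize the sum over $\pi$ by conditioning on $S=S^i_\pi$. The four values $m_{t-1},m_t,M_{t-1},M_t$ depend on $\pi$ only through $S$ and $S\setminus\{i\}$, and exactly $(s-1)!(n-s)!$ permutations realize a given $S\ni i$ with $s=|S|$. This yields
\[
  \Phi_i(v)=\frac{1}{n!\,j^n(k-1)}\sum_{i\in S\subseteq N}(s-1)!(n-s)!\,A_S,\qquad
  A_S=\sum_{x\in J^n}\big[(m_t-m_{t-1})+(M_{t-1}-M_t)\big].
\]
Grouping the four $v$-terms inside $A_S$ by their fill level shows $A_S=D(S)-D(S\setminus\{i\})$, where $D(U):=\sum_{x\in J^n}\big[v(x_U,\mathbf{0}_{-U})-v(x_U,\mathbf{(j-1)}_{-U})\big]$.

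Finally I would identify $D$ with $C$. Since the summand defining $D(U)$ depends on $x$ only through $x_U$ and equals the negative of the summand defining $C(v,N\setminus U)$ in \eqref{eq_def_C_function}, one gets $D(U)=-(k-1)j^n\,C(v,N\setminus U)$. Substituting turns $\tfrac{1}{j^n(k-1)}A_S$ into $C\big(v,(N\setminus S)\cup\{i\}\big)-C(v,N\setminus S)$, so that
\[
  \Phi_i(v)=\sum_{i\in S\subseteq N}\frac{(s-1)!(n-s)!}{n!}\Big[C\big(v,(N\setminus S)\cup\{i\}\big)-C(v,N\setminus S)\Big].
\]
The map $S\mapsto(N\setminus S)\cup\{i\}$ is a bijection of the subsets of $N$ containing $i$ onto themselves; writing $s'=|(N\setminus S)\cup\{i\}|=n-s+1$ one checks $(s-1)!(n-s)!=(s'-1)!(n-s')!$ and $N\setminus S=\big((N\setminus S)\cup\{i\}\big)\setminus\{i\}$, so re-indexing the sum by $S':=(N\setminus S)\cup\{i\}$ reproduces exactly \eqref{eq_roll_call_jk_simple_games_simplified}.

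The main obstacle is the first paragraph: the clean book-keeping of which voter settles which threshold, including the fact that for every roll-call and every $h$ there is exactly one $h$-pivot, so that the count telescopes into $(m_t-m_{t-1})+(M_{t-1}-M_t)$. Once this is established, the remaining steps are routine algebra; the only non-obvious manipulation is the complementation reindexing $S\mapsto(N\setminus S)\cup\{i\}$, which I would verify on the $(3,3)$ game of Example~\ref{ex_j_k_simple_game} (where both sides reproduce $\Phi=(5/12,7/12)$) as a safeguard.
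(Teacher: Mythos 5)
Your proof is correct and follows essentially the same route as the paper: the paper likewise rewrites $n!\,j^n(k-1)$ times the roll-call definition as the sum over $(\pi,x)$ of the interval-shrinkage terms $\left[v(x_{\pi_{<i}},\mathbf{(j-1)}_{\pi_{\ge i}})-v(x_{\pi_{<i}},\mathbf{0}_{\pi_{\ge i}})\right]-\left[v(x_{\pi_{\le i}},\mathbf{(j-1)}_{\pi_{> i}})-v(x_{\pi_{\le i}},\mathbf{0}_{\pi_{> i}})\right]$, which is exactly your count $(m_t-m_{t-1})+(M_{t-1}-M_t)$, and then condenses the permutations into coalitions with weight $(s-1)!(n-s)!$. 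The only differences are cosmetic: you prove the threshold-counting step rigorously where the paper justifies it only by its verbal interpretation, and you group permutations by the predecessor sets $\pi_{\le i}$ rather than by the paper's successor sets $\pi_{\ge i}$ and $\pi_{>i}$, which is why you need the final complementation re-indexing $S\mapsto(N\setminus S)\cup\{i\}$ that the paper's choice of grouping avoids.
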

\begin{proof}
  For a given permutation $\pi\in\mathcal{S}_n$ and $i\in N$, we set $\pi_{<i}=
  \left\{j\in N\,:\, \pi(j)<\pi(i)\right\}$, $\pi_{\le i}= \left\{j\in N\,:\, \pi(j)\le\pi(i)\right\}$, $\pi_{>i}=\left\{j\in N\,:\,
  \pi(j)>\pi(i)\right\}$, and $\pi_{\ge i}= \left\{j\in N\,:\, \pi(j)\ge \pi(i)\right\}$. With this, we can rewrite
  $n!\cdot j^n\cdot (k-1)$ times the right hand side of
  Equation~(\ref{eq_roll_call_jk_simple_games}) to
  \begin{equation}
    \label{eq_uncertainty_reduction_jk_simple_games}
    \sum_{(\pi,x)\in\mathcal{S}_n\times J^n}\!\! \Big(\!\!\left[v(x_{\pi_{<i}},\mathbf{(j-1)}_{\pi_{\ge i}})-v(x_{\pi_{<i}},
    \mathbf{0}_{\pi_{\ge i}})\right]
    -\left[v(x_{\pi_{\le i}},\mathbf{(j-1)}_{\pi_{> i}})-v(x_{\pi_{\le i}},\mathbf{0}_{\pi_{> i}})\right]\!\!\Big).
  \end{equation}
  The interpretation is as follows. Since $v$ is monotone, before the vote of player~$i$ exactly the values in
  $\left\{v(x_{\pi_{<i}},\mathbf{0}_{\pi_{\ge i}}),\dots, v(x_{\pi_{<i}},\mathbf{(j-1)}_{\pi_{\ge i}})\right\}$ are still possible as final group decision. After
  the vote of player~$i$ this interval eventually shrinks to $\left\{v(x_{\pi_{\le i}},\mathbf{0}_{\pi_{> i}}),\dots, v(x_{\pi_{\le i}},\mathbf{(j-1)}_{\pi_{> i}})\right\}$.
  The difference in (\ref{eq_uncertainty_reduction_jk_simple_games}) just computes the difference between the lengths of both intervals, i.e.,
  the number of previously possible outputs that can be excluded for sure after the vote of player~$i$.

  As in the situation where we simplified the Shapley-Shubik index of a simple game given by Equation~(\ref{eq_roll_call_simple_games_first})
  to Equation~(\ref{eq_ssi_simple_games}), we observe that it is sufficient to know the sets $\pi_{\ge i}$ and $\pi_{>i}$ for every permutation
  $\pi\in\mathcal{S}_n$. So we can condense all permutations that lead to the same set and can simplify the expression in  (\ref{eq_uncertainty_reduction_jk_simple_games})
  and obtain Equation~(\ref{eq_roll_call_jk_simple_games_simplified}).
\end{proof}

While we think that the roll-call motivation stated above for Equation~(\ref{eq_roll_call_jk_simple_games}) is a valid justification on its own,
we also want to pursue the more rigor path to characterize power indices, i.e., we want to give an axiomatization. A set of properties that are satisfied
by the Shapley-Shubik index for simple games and uniquely characterize the index is given, e.g., in \cite{shapley1953value,shapley1954method}. In order
to obtain a similar result for $(j,k)$ simple games, we consider a \emph{power index} $F$ as a map form $v$ to $\mathbb{R}^n$ for all
$(j,k)$ simple games $v\in\mathcal{U}_n$.

\begin{definition}
  \label{def_power_index_properties}
  A power index $F$ for $(j,k)$ simple games satisfies
  \begin{itemize}
    \item \emph{Positivity} (P) if $F(v)\neq\mathbf{0}$ and $F_i(v)\ge 0$ for all $i\in N$ and all $v\in \mathcal{U}_n$;
    \item \emph{Anonymity} (A) if $F_{\pi(i)}(\pi v)=F_i(v)$ for all permutations $\pi$ of $N$, $i\in N$, and $v\in\mathcal{U}_n$, where
          $\pi v (x)=v(\pi(x))$ and $\pi(x)=\left(x_{\pi(i)}\right)_{i\in N}$;
    \item \emph{Symmetry} (S) if $F _{i}(v)=F_{j}(v)$ for all $v\in \mathcal{U}_{n}$ and all voters $i,j\in N$ that are equivalent in $v$;
    \item \emph{Efficiency} (E) if $\sum_{i\in N}F _{i}(v)=1$ for all $v\in \mathcal{U}_{n}$;
    \item the \emph{Null player property} (NP) if $F _{i}(v)=0$ for every null voter $i$ of an arbitrary game $v\in \mathcal{U}_n$;
    \item the \emph{transfer property (T)} if for all $u,v\in \mathcal{U}_n$ and all $i\in N$ we have $F_i(u)+F_i(v)=F_i(u\vee v)+F_i(u\wedge v)$,
          where $(u\vee v)(x)=\max\{u(x),v(x)\}$ and $(u\wedge v)(x)=\min\{u(x),v(x)\}$ for all $x\in J^n$, see Definition~\ref{def_vee_and wedge}
          and Proposition~\ref{prop_vee_and wedge};
    \item \emph{Convexity} (C)  if $F(w)=\alpha F(u)+ \beta F(v)$ for all $u, v\in \mathcal{U}_{n}$ and all $\alpha,\beta\in\mathbb{R}_{\ge 0}$ with
           $\alpha+\beta=1$,  where $w=\alpha u+\beta v\in \mathcal{U}_n$;
    \item \emph{Linearity} (L) if $F(w)=\alpha F(u)+ \beta F(v)$ for all $u, v\in \mathcal{U}_{n}$ and all $\alpha,\beta\in\mathbb{R}$,  where $w=\alpha u+\beta v\in \mathcal{U}_n$.
  \end{itemize}
\end{definition}
Note that $\alpha\cdot u+ \beta\cdot v$ does not need to be a $(j,k)$ simple game for $u,v\in\mathcal{U}_n$, where $\alpha\cdot u$ is defined via $(\alpha\cdot u)(x)=\alpha\cdot u(x)$
for all $x\in J^n$ and all $\alpha\in \mathbb{R}$. We remark that, obviously, (L) implies (C) and (L) implies (T). Also (S) is implied by (A).  Some of the properties
of Definition~\ref{def_power_index_properties} have been proven to be valid for $\Phi$ in \cite{freixas2005shapley}. However, for the convenience of the
reader we give an extended result and a full proof next:

\begin{proposition}
  \label{prop_power_index_properties}
  The power index $\Phi$, defined in Equation~(\ref{eq_roll_call_jk_simple_games}), satisfies the axioms (P), (A), (S), (E), (NP), (T), (C), and (L).
\end{proposition}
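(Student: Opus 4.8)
The plan is to make the simplified formula of Lemma~\ref{lemma_simplified_formula} carry the entire argument by reading it as the statement that $\Phi_i(v)$ is the classical (TU-game) Shapley value of the cooperative game $w_v\colon 2^N\to\mathbb{R}$ defined by $w_v(S)=C(v,S)$, which satisfies $w_v(\emptyset)=C(v,\emptyset)=0$. Once this identification is in place, each axiom reduces either to a structural property of the assignment $v\mapsto w_v$ combined with the corresponding well-known property of the Shapley value, or to an axiom already implied by another. Accordingly I would only establish (L), (E), (NP), (P), and (A) directly, since the excerpt already records that (L) implies both (C) and (T), and that (A) implies (S).

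For linearity I would observe that, for each fixed $T\subseteq N$, the map $v\mapsto C(v,T)$ is linear, being a fixed signed sum of evaluations of $v$ by Equation~(\ref{eq_def_C_function}); hence $w_{\alpha u+\beta v}=\alpha w_u+\beta w_v$, and linearity of the Shapley value in the game yields (L). This gives (C) at once, and for (T) I would additionally use the pointwise identity $(u\vee v)+(u\wedge v)=u+v$ on $J^n$, so that $w_{u\vee v}+w_{u\wedge v}=w_u+w_v$ and linearity closes the argument. For efficiency I would compute the two boundary values $w_v(\emptyset)=0$ and $w_v(N)=\tfrac{1}{j^n(k-1)}\sum_{x}\big(v(\mathbf{(j-1)})-v(\mathbf{0})\big)=1$, using $v(\mathbf{(j-1)})=k-1$ and $v(\mathbf{0})=0$; efficiency of the Shapley value then gives $\sum_{i\in N}\Phi_i(v)=w_v(N)-w_v(\emptyset)=1$. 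For the null player property I would show that if $i$ is null then $C(v,S)=C(v,S\setminus\{i\})$ for every $S\ni i$: in Equation~(\ref{eq_def_C_function}) the profiles $(\mathbf{(j-1)}_S,x_{-S})$ and $(\mathbf{(j-1)}_{S\setminus\{i\}},x_{-(S\setminus\{i\})})$ differ only in coordinate $i$, and likewise for the $\mathbf{0}$-terms, so nullity makes the summands coincide; thus every marginal in Equation~(\ref{eq_roll_call_jk_simple_games_simplified}) vanishes and $\Phi_i(v)=0$.

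Positivity splits in two. Nonnegativity is cleanest straight from the original definition in Equation~(\ref{eq_roll_call_jk_simple_games}), since each summand counts $h$-pivots and is a nonnegative integer; alternatively it follows from the simplified formula once one checks that $w_v$ is monotone, i.e.\ each marginal $C(v,S)-C(v,S\setminus\{i\})\ge 0$, which I would verify termwise from the monotonicity of $v$ (after regrouping, the first all-$(j-1)$ bracket is nonnegative and the second all-$\mathbf{0}$ bracket is nonpositive, so their difference is nonnegative). That $\Phi(v)\neq\mathbf{0}$ is then forced by the efficiency already proved. Anonymity is the step I expect to carry the most bookkeeping and hence the main obstacle: the key computation is $C(\pi v,T)=C(v,\pi^{-1}(T))$, obtained from Equation~(\ref{eq_def_C_function}) by the change of variables $y=\pi(x)$, which moves the coordinates pinned to $j-1$ (resp.\ $0$) from $T$ to $\pi^{-1}(T)$ while leaving the sum over $J^n$ invariant. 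Substituting this into Equation~(\ref{eq_roll_call_jk_simple_games_simplified}) for $\Phi_{\pi(i)}(\pi v)$ and reindexing by $R=\pi^{-1}(S)$ (so that $\pi(i)\in S\iff i\in R$, $|R|=|S|$, and $\pi^{-1}(S\setminus\{\pi(i)\})=R\setminus\{i\}$) collapses it to $\Phi_i(v)$. Finally (S) follows by applying (A) to the transposition $\pi_{ih}$, since equivalence of $i$ and $h$ means exactly $\pi_{ih}v=v$, and (C) together with (T) follow from (L) as already noted.
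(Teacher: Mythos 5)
Your proof is correct, but it takes a genuinely different route from the paper's. The paper argues directly on the roll-call expression from the proof of Lemma~\ref{lemma_simplified_formula} (Equation~(\ref{eq_uncertainty_reduction_jk_simple_games})): positivity and the null player property are verified termwise for each pair $(\pi,x)$ using monotonicity of $v$, efficiency comes from a telescoping sum over the positions of each ordering, anonymity is declared obvious from the definition, and (L), (C), (T) follow because that expression is linear in $v$ together with the identity $x+y=\max\{x,y\}+\min\{x,y\}$. You instead promote Lemma~\ref{lemma_simplified_formula} to the statement that $\Phi(v)$ is the classical Shapley value of the TU game $S\mapsto C(v,S)$, and then reduce each axiom either to a structural property of the map $v\mapsto C(v,\cdot)$ (linearity in $v$, the change-of-variables identity $C(\pi v,T)=C(v,\pi^{-1}(T))$, invariance of marginals for null players) or to a known axiom of the Shapley value on TU games. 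This is precisely the identification the paper postpones to Definition~\ref{def_aver_tu_j_k} and Theorem~\ref{thm_aver_form_SSI} --- your $w_v$ is the average game $\widetilde{v}$ --- so in effect you have front-loaded Section~\ref{sec_average_game}. Your route buys modularity: efficiency is immediate from $w_v(N)=1$ rather than a telescoping computation; the transfer axiom is handled cleanly at the TU level, where $w_{u\vee v}+w_{u\wedge v}=w_u+w_v$ and additivity of the Shapley value applies even though $u+v\notin\mathcal{U}_n$ (a subtlety that the paper's remark ``(L) implies (T)'' glosses over); and your change-of-variables argument for (A), with the reindexing $R=\pi^{-1}(S)$, is more explicit than the paper's one-line claim. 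The paper's route buys self-containedness --- no appeal to known theorems about the Shapley value --- and keeps the probabilistic roll-call interpretation visible in every step.
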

\newcommand{\upim}{x_{\pi_{<i}},\left(\mathbf{j-1}\right)_{\pi_{\ge i}}}
\newcommand{\upi}{x_{\pi_{\le i}},\left(\mathbf{j-1}\right)_{\pi_{>i}}}
\newcommand{\dnim}{x_{\pi_{<i}},\mathbf{0}_{\pi_{\ge i}}}
\newcommand{\dni}{x_{\pi_{\le i}},\mathbf{0}_{\pi_{>i}}}
\newcommand{\upimk}{x_{\kappa_{<i}},\left(\mathbf{j-1}\right)_{\kappa_{\ge i}}}
\newcommand{\upik}{x_{\kappa_{\le i}},\left(\mathbf{j-1}\right)_{\kappa_{>i}}}
\newcommand{\dnimk}{x_{\kappa_{<i}},\mathbf{0}_{\kappa_{\ge i}}}
\newcommand{\dnik}{x_{\kappa_{\le i}},\mathbf{0}_{\kappa_{>i}}}
\newcommand{\upimky}{y_{\kappa_{<j}},\left(\mathbf{j-1}\right)_{\kappa_{\ge j}}}
\newcommand{\upiky}{y_{\kappa_{\le j}},\left(\mathbf{j-1}\right)_{\kappa_{>j}}}
\newcommand{\dnimky}{y_{\kappa_{<j}},\mathbf{0}_{\kappa_{\ge j}}}
\newcommand{\dniky}{y_{\kappa_{\le j}},\mathbf{0}_{\kappa_{>j}}}
\newcommand{\upimkx}{x_{\kappa_{<j}},\left(\mathbf{j-1}\right)_{\kappa_{\ge j}}}
\newcommand{\upikx}{x_{\kappa_{\le j}},\left(\mathbf{j-1}\right)_{\kappa_{>j}}}
\newcommand{\dnimkx}{x_{\kappa_{<j}},\mathbf{0}_{\kappa_{\ge j}}}
\newcommand{\dnikx}{x_{\kappa_{\le j}},\mathbf{0}_{\kappa_{>j}}}
\newcommand{\upjm}{x_{\pi_{<j}},\left(\mathbf{j-1}\right)_{\pi_{\ge j}}}
\newcommand{\upj}{x_{\pi_{\le j}},\left(\mathbf{j-1}\right)_{\pi_{>j}}}
\newcommand{\dnjm}{x_{\pi_{<j}},\mathbf{0}_{\pi_{\ge j}}}
\newcommand{\dnj}{x_{\pi_{\le j}},\mathbf{0}_{\pi_{>j}}}
\begin{proof}
  We use the notation from the proof of Lemma~\ref{lemma_simplified_formula} and let $v$ be an arbitrary $(j,k)$
  simple game with $n$ players.

  For each $x\in J^n$, $\pi\in\mathcal{S}_n$, and $i\in N$, we have $v(\upim)\ge v(\upi)$
  and $v(\dni)\ge v(\dnim)$, so that $\Phi_i(v)\ge 0$ due to Equation~(\ref{eq_uncertainty_reduction_jk_simple_games}). Since we will show that $\Phi$ is efficient, we especially
  have $\Phi(v)\neq \mathbf{0}$, so that $\Phi$ is positive.

  For any permutation $\pi\in\mathcal{S}_n$ and any $0\le h\le n$ let $\pi|h:=\{\pi(i)\,:\,1\le i\le h\}$, i.e.,
  the first $h$ players in ordering $\pi$. Then, for any profile $x\in J^n$, we have
  \begin{eqnarray*}
    &&\sum_{i=1}^n \Big(v(\upim)-v(\upi)+v(\dni)-v(\dnim)\Big)\nonumber\\
    &=&\sum_{h=1}^n \Big(v(x_{\pi|h-1},\mathbf{(j-1)}_{-\pi|h-1})\!-\! v(x_{\pi|h},\mathbf{(j-1)}_{-\pi|h})\Big)
    \!+\!\sum_{h=1}^n \Big(v(x_{\pi|h},\mathbf{0}_{-\pi|h})\!-\!v(x_{\pi|h-1},\mathbf{0}_{-\pi|h-1})\Big)\nonumber\\
    &=& v(x_{\pi|0},\mathbf{(j-1)}_{-\pi|0})-v(x_{\pi|n},\mathbf{(j-1)}_{-\pi|n})+v(x_{\pi|n},\mathbf{0}_{-\pi|n})-
    v(x_{\pi|0},\mathbf{0}_{-\pi|0})\nonumber\\
    &=& v(\mathbf{(j-1)})-v(x)+v(x)-v(\mathbf{0})=k-1-0=k-1,
  \end{eqnarray*}
  so that Equation~(\ref{eq_uncertainty_reduction_jk_simple_games}) gives $\sum_{i=1}^n \Phi_i(v)=1$, i.e., $\Phi$ is efficient.

  The definition of $\Phi$ is obviously anonymous, so that it is also symmetric. If player $i\in N$ is a null player and
  $\pi\in\mathcal{S}_n$ arbitrary, then $v(\dnim)=v(\dni)$ and  $v(\upim)=v(\upi)$, so that $\Phi_i(v)=0$, i.e., $\Phi$
  satisfies the null player property. Since Equation~(\ref{eq_uncertainty_reduction_jk_simple_games}) is linear in the
  involved $(j,k)$ simple game, $\Phi$ satisfies (L) as well as (C), which is only a relaxation. Since $x+y=\max\{x,y\}+\min\{x,y\}$
  for all $x,y\in\mathbb{R}$, $\Phi$ also satisfies the transfer axiom (T).
\end{proof}

Actually the proof of Proposition~\ref{prop_power_index_properties} is valid for a larger class of power indices for $(j,k)$ simple games. To
this end we associate each vector $a\in J^n$ with the function 
$v_a$ defined by
$$
  v_a(S)=\displaystyle\frac{1}{k-1}\cdot[v(\mathbf{(j-1)}_S, a_{-S})-v(\mathbf{0}_S, a_{-S})]
$$
for all $S\subseteq N$. With this, we define the mapping $\Phi^{a}$  on $\mathcal{U}_n$ by
\begin{equation}
  \label{eq_parametric_index}
  \Phi_i^{a}(v) =\sum_{i\in S\subseteq N}{\displaystyle\frac{(s-1)!(n-s)!}{n!}[v_{a}(S)-v_{a}(S\backslash \{i\}]}
\end{equation}
for all $i\in N$. We remark that it can be easily checked that $v_a$ is a TU game, c.f.\ Section~\ref{sec_average_game}.

Similar as in the proof of Lemma~\ref{prop_power_index_properties}, we conclude:
\begin{proposition}
  \label{prop_parametric_power_index}
  For every $a\in J^n$ such that $a_i=a_j$ for all $i,j\in N$, the mapping $\Phi^{a}$ the axioms (P), (A), (S), (E), (NP), (T), (C), and (L).
\end{proposition}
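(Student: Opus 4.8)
The plan is to read off from Equation~(\ref{eq_parametric_index}) that $\Phi^a_i(v)$ is exactly the Shapley value $\operatorname{Sh}_i(v_a)$ of the TU game $v_a$ attached to $v$ and $a$, and then to reduce each of the eight axioms to a property of $\operatorname{Sh}(\cdot)$ together with the behaviour of the assignment $v\mapsto v_a$. Two elementary facts about this assignment, valid for \emph{every} $a\in J^n$, drive the whole argument. First, $v\mapsto v_a$ is linear, because $v$ enters the defining formula of $v_a$ linearly; hence $(\alpha u+\beta v)_a=\alpha u_a+\beta v_a$. Second, $v_a$ is a monotone game with $v_a(\emptyset)=0$ and $v_a(N)=1$: the boundary values are immediate from $v(\mathbf{0})=0$ and $v(\mathbf{(j-1)})=k-1$, while for $S\subseteq T$ with $D=T\backslash S$ one writes $(k-1)\big(v_a(T)-v_a(S)\big)$ as the nonnegative quantity obtained by raising the $D$-coordinates from $a_D$ up to $\mathbf{(j-1)}$ minus the nonpositive quantity obtained by lowering the $D$-coordinates from $a_D$ down to $\mathbf{0}$, both signs being forced by the monotonicity of $v$.

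With these in hand I would dispatch the axioms needing no hypothesis on $a$; each reduces to a property of $\operatorname{Sh}$ that is classical or follows by the one-line arguments already used for $\Phi$ in Proposition~\ref{prop_power_index_properties}. Efficiency~(E) is the telescoping identity $\sum_{i\in N}\operatorname{Sh}_i(v_a)=v_a(N)-v_a(\emptyset)=1$, carried out along a fixed ordering exactly as there. Positivity~(P) follows since every marginal $v_a(S)-v_a(S\backslash\{i\})$ is nonnegative by monotonicity of $v_a$, so $\Phi^a_i(v)\ge 0$, and $\Phi^a(v)\neq\mathbf{0}$ because the entries sum to $1$ by~(E). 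Linearity~(L) is the linearity of $\operatorname{Sh}$ in the game combined with the first fact above; Convexity~(C) is its restriction to $\alpha,\beta\ge 0$, and the transfer axiom~(T) follows from~(L) via $u+v=(u\vee v)+(u\wedge v)$. For the null player property~(NP) I would check that a null player $i$ of $v$ is a dummy of $v_a$: the profiles defining $v_a(S\cup\{i\})$ and $v_a(S)$ differ only in coordinate~$i$, so the null-player condition on $v$ forces $v_a(S\cup\{i\})=v_a(S)$, whence $\operatorname{Sh}_i(v_a)=0$.

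The only step where the hypothesis $a_i=a_j$ for all $i,j$ is genuinely used --- and the main obstacle --- is Anonymity~(A); Symmetry~(S) then follows because (A) implies (S). For a permutation $\pi$ with $\pi v(x)=v(\pi(x))$ and $\pi(x)=(x_{\pi(i)})_{i\in N}$, I would compute the game $(\pi v)_a$. Putting $y=(\mathbf{(j-1)}_S,a_{-S})$, constancy of $a$ gives $y_l=j-1$ for $l\in S$ and $y_l$ equal to the common value of $a$ otherwise, so that $\pi(y)=(\mathbf{(j-1)}_{\pi^{-1}(S)},a_{-\pi^{-1}(S)})$, and the same holds with $\mathbf{0}$ in place of $\mathbf{(j-1)}$. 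Hence $(\pi v)_a(S)=v_a(\pi^{-1}(S))$, i.e.\ $(\pi v)_a$ is the relabelling $v_a\circ\pi^{-1}$ of $v_a$ by $\pi$; anonymity of the Shapley value then yields $\Phi^a_{\pi(i)}(\pi v)=\operatorname{Sh}_{\pi(i)}\!\big(v_a\circ\pi^{-1}\big)=\operatorname{Sh}_i(v_a)=\Phi^a_i(v)$, which is~(A). The crucial identity $\pi(y)=(\mathbf{(j-1)}_{\pi^{-1}(S)},a_{-\pi^{-1}(S)})$ breaks down for non-constant $a$, since then the background levels get permuted into $a\circ\pi\neq a$; this is exactly where the proof departs from the unconditional argument of Proposition~\ref{prop_power_index_properties} and why the constancy assumption cannot be dropped.
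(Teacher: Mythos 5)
Your proof is correct. Note that the paper offers no self-contained argument for this proposition: it merely states that one concludes "similar as in the proof of Proposition~\ref{prop_power_index_properties}", i.e., by re-running the direct verification on the permutation-sum formula (telescoping for (E), monotonicity for (P), linearity of the defining expression for (L), (C), (T), with anonymity declared obvious). You instead factor $\Phi^a=\operatorname{Sh}\circ(v\mapsto v_a)$, where $\operatorname{Sh}$ is the Shapley value, and reduce everything to classical properties of $\operatorname{Sh}$ plus structural properties of the map $v\mapsto v_a$ (linearity, preservation of monotonicity and of null players, and $\pi$-equivariance). The computations underneath are the same, so this is not a fundamentally different proof, but your organization buys two things the paper's pointer does not: it isolates the fact that (P), (E), (NP), (T), (C), (L) hold for \emph{every} $a\in J^n$, constant or not, and it exhibits the one place where the hypothesis $a_i=a_j$ is genuinely needed, namely the identity $(\pi v)_a=v_a\circ\pi^{-1}$ behind (A) (and hence (S)), which fails for non-constant $a$ because the background profile gets permuted to $a\circ\pi$. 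One small caveat: the breakdown of that identity shows only that \emph{this argument} requires constancy, not that (A) itself fails for non-constant $a$, so your closing claim that the assumption "cannot be dropped" is slightly overstated; but this is a side remark and does not affect the correctness of the proof of the stated proposition.
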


While the Shapley-Shubik index for simple games is the unique power index that is symmetric, efficient, satisfies both the null player property and the transfer property,
see \cite{dubey1975uniqueness}, this result does not transfer to general $(j,k)$ simple games.

\begin{proposition}
  \label{prop_counter_example}
  When $j\geq 3$, there exists some $a\in J^n$ such that $\Phi^{a}\neq \Phi$.
\end{proposition}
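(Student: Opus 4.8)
The plan is to exploit the fact that the index $\Phi$ is simply the average of the parametric indices $\Phi^a$ over all $a\in J^n$. Indeed, comparing the function $C(v,\cdot)$ in (\ref{eq_def_C_function}) with the games $v_a$ used in (\ref{eq_parametric_index}), one sees that $C(v,S)=\frac{1}{j^n}\sum_{a\in J^n} v_a(S)$, since the summand in (\ref{eq_def_C_function}) is exactly $(k-1)\,v_x(S)$ and depends only on $x_{-S}$. Plugging this into Lemma~\ref{lemma_simplified_formula} and using that the Shapley value is linear in the underlying TU game yields $\Phi=\frac{1}{j^n}\sum_{a\in J^n}\Phi^a$. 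Consequently, if $\Phi^a=\Phi$ held for \emph{every} $a$, then all the $\Phi^a$ would coincide; so it suffices to produce a single game on which two of the parametric indices disagree.

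The next step is the construction, which is where $j\ge 3$ enters. I assume $n\ge 2$ (for $n=1$ every efficient index gives $(1)$ and all $\Phi^a$ coincide, so there is nothing to prove). Since $j\ge 3$, the set of intermediate levels $\{1,\dots,j-2\}$ is non-empty, and I would take the veto game $v=u^E$ of Definition~\ref{def_point_veto} with $E=\{a^1,a^2\}$, where $a^1=(0,j-1,0,\dots,0)$ and $a^2=(j-1,1,0,\dots,0)$. Explicitly, $v(x)=k-1$ iff $x_2=j-1$, or $x_1=j-1$ and $x_2\ge 1$, and $v(x)=0$ otherwise. This is a genuine $(j,k)$ simple game with veto; players $3,\dots,n$ are null (their coordinate is $0$ in every element of $E$), and, crucially, for $j=2$ the vector $a^2$ dominates $a^1$ so that $E$ collapses and player~$1$ becomes null — this is precisely why the phenomenon is invisible below $j=3$.

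Then I would compute $\Phi^a_1(v)$. Because players $3,\dots,n$ are null in the TU game $v_a$, the Shapley value collapses onto the two active players: in the random-order reading player~$1$'s marginal depends only on whether player~$2$ precedes him, giving $\Phi^a_1(v)=\tfrac12 v_a(\{1\})+\tfrac12\big(v_a(\{1,2\})-v_a(\{2\})\big)$. A direct evaluation from the definition of $u^E$ gives $v_a(\{2\})=v_a(\{1,2\})=1$ and $v_a(\{1\})=D(a_2)$, where $D(a_2)=1$ precisely for the intermediate levels $1\le a_2\le j-2$ and $D(a_2)=0$ for $a_2\in\{0,j-1\}$; hence $\Phi^a_1(v)=\tfrac12 D(a_2)$. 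Averaging over $a$ as above then yields $\Phi_1(v)=\tfrac12\cdot\tfrac1j\sum_{a_2\in J}D(a_2)=\tfrac{j-2}{2j}$, whereas for the constant vector $a=\mathbf{1}$ (intermediate because $j\ge 3$) we obtain $\Phi^{\mathbf 1}_1(v)=\tfrac12\neq\tfrac{j-2}{2j}$. Thus $\Phi^{\mathbf 1}\neq\Phi$; since $\mathbf 1$ is constant, Proposition~\ref{prop_parametric_power_index} guarantees that $\Phi^{\mathbf 1}$ still satisfies all the listed axioms, which is what makes the example meaningful as a failure of uniqueness.

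The routine parts are checking monotonicity of $v$ and the elementary evaluations of $u^E$; the only genuine steps are the reduction of the Shapley value of $v_a$ to its two non-null players and the bookkeeping identity $\Phi=\frac{1}{j^n}\sum_{a}\Phi^a$. The main conceptual obstacle is recognizing that the dependence of $\Phi^a$ on $a$ survives only through intermediate approval levels — captured by the function $D$ — which is exactly what forces the hypothesis $j\ge 3$ and makes a constant-vector witness available.
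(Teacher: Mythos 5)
Your proof is correct, but it takes a genuinely different route from the paper's. The paper argues by one short direct computation: it takes the single point-veto game $u^b$ with $b=(1,j-1,0,\dots,0)$ and the constant witness $a=(j-2,\dots,j-2)$, then evaluates both indices explicitly, obtaining $\Phi^{a}(u^b)=(0,1,0,\dots,0)$ from Equation~(\ref{eq_parametric_index}) but $\Phi(u^b)=\left(\tfrac{1}{j},\tfrac{j-1}{j},0,\dots,0\right)$ from Equation~(\ref{eq_roll_call_jk_simple_games_simplified}). You instead first prove the structural identity $\Phi=\tfrac{1}{j^n}\sum_{a\in J^n}\Phi^a$ (which is indeed valid: the summand in (\ref{eq_def_C_function}) equals $(k-1)v_x(T)$, and the Shapley formula is linear in the underlying TU game), then work with the two-element veto game $u^E$, $E=\{(0,j-1,0,\dots,0),(j-1,1,0,\dots,0)\}$, compute $\Phi^a_1(u^E)=\tfrac12 D(a_2)$ for \emph{all} $a$, and average; your evaluations of $v_a$ and the two-player reduction of the Shapley value (legitimate since players $3,\dots,n$ are null in $v_a$) all check out, and $\tfrac{j-2}{2j}\neq\tfrac12$ always. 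What your route buys is the mixture identity itself, which is not stated in the paper and makes the mechanism transparent: $\Phi$ is the uniform average of the parametric indices, so any single game on which $a\mapsto\Phi^a(v)$ is non-constant already forces the conclusion; moreover your witness $a=\mathbf{1}$ is constant, so Proposition~\ref{prop_parametric_power_index} applies to it just as to the paper's witness. What the paper's route buys is economy: a simpler game and two short computations, with no auxiliary lemma. One small caveat, which applies equally to the paper's own proof: the statement implicitly requires $n\ge 2$, since for $n=1$ every $\Phi^a$ coincides with $\Phi$ and the proposition as stated fails; you flag this, though your phrase {\lq\lq}nothing to prove{\rq\rq} would more accurately read {\lq\lq}the claim is false for $n=1$, so $n\ge 2$ must be assumed{\rq\rq}.
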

\begin{proof}
  Consider the $(j, k)$ simple game $u^b$ with point-veto $b=(1, j-1, 0,\cdots, 0)\in J^n$ and let $a=(j-2,j-2,\cdots,j-2)\in J^n$.
  From Equation~(\ref{eq_parametric_index}) we conclude $\Phi^{a} (u^{b})=\left(0,\, 1,\, 0,\,\cdots, \,0\right)$. Using Equation~(\ref{eq_roll_call_jk_simple_games_simplified}) we
  easily compute $\Phi (u^{b})=\left(\displaystyle \frac{1}{j},\, \displaystyle\frac{j-1}{j}, \,0,\,\cdots,\, 0\right)\neq \Phi^{a}$.
\end{proof}

We remark that the condition $j\ge 3$ is necessary in Proposition~\ref{prop_counter_example}, since for $(2,2)$ simple games the roll-call interpretation
of Mann and Shapley, see \cite{MannShapley}, for the Shapley-Shubik index for simple games yields  $\Phi^{\mathbf{0}}=\Phi^{\mathbf{1}}= \Phi$.

\section{The average game of a $(j,k)$ simple game}
\label{sec_average_game}

Equation~(\ref{eq_roll_call_jk_simple_games_simplified}) in Lemma~\ref{lemma_simplified_formula} has the important consequence that $\Phi(v)$ equals the
Shapley value of the TU game $C(v,\cdot)$, where a TU game is a mapping $v\colon 2^N\to\mathbb{R}$ with $v(\emptyset)=0$. To this end we introduce
an operator that associates each $(j,k)$ simple game $v$ with a TU game $\widetilde{v}$ as follows.

\begin{definition}
  \label{def_aver_tu_j_k}
  Let $v\in\mathcal{U}_n$ be an arbitrary $(j,k)$ simple game. The \emph{average game}, denoted by $\widetilde{v}$, associated to $v$  is defined by
  \begin{equation}
    \label{eq_aver_tu_j_k}
    \widetilde{v}(S)=\displaystyle\frac{1}{j^{n}(k-1)}\sum_{x\in J^{n}}{\left[v(\mathbf{(j-1)}_S\,,x_{-S})-v(\mathbf{0}_S\,,x_{-S})\right]}
  \end{equation}
  for all $ S\subseteq N$.
\end{definition}

With that notation our above remark reads:

 \begin{theorem}
  \label{thm_aver_form_SSI}
  For every $(j,k)$ simple game $v$ the vector $\Phi(v)$ equals the Shapley value of $\widetilde{v}$.
\end{theorem}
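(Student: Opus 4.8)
The plan is to recognize that this theorem is essentially a repackaging of Lemma~\ref{lemma_simplified_formula} in the vocabulary of the Shapley value, so the proof reduces to matching two definitions. First I would observe that the set function $C(v,\cdot)$ appearing in the simplified formula and the average game $\widetilde{v}$ of Definition~\ref{def_aver_tu_j_k} are literally the same object: comparing Equation~(\ref{eq_def_C_function}) with Equation~(\ref{eq_aver_tu_j_k}) term by term shows that for every $S\subseteq N$ both quantities equal $\frac{1}{j^n(k-1)}\sum_{x\in J^n}\big(v(\mathbf{(j-1)}_S,x_{-S})-v(\mathbf{0}_S,x_{-S})\big)$, hence $C(v,S)=\widetilde{v}(S)$.

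Next I would check that $\widetilde{v}$ is a legitimate TU game, i.e.\ that $\widetilde{v}(\emptyset)=0$, so that speaking of its Shapley value makes sense. For $S=\emptyset$ no coordinate is fixed, so each summand reduces to $v(x)-v(x)=0$ and the whole sum vanishes. I would then recall the standard Shapley-value formula: for a TU game $w$ the value assigned to player $i$ is $\sum_{i\in S\subseteq N}\frac{(s-1)!(n-s)!}{n!}\big(w(S)-w(S\backslash\{i\})\big)$. Substituting $w=\widetilde{v}$ and using the identification $\widetilde{v}(\cdot)=C(v,\cdot)$ from the previous step turns this expression into exactly the right-hand side of Equation~(\ref{eq_roll_call_jk_simple_games_simplified}). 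By Lemma~\ref{lemma_simplified_formula} that right-hand side is $\Phi_i(v)$, and since this holds for every $i\in N$, the vector $\Phi(v)$ coincides with the Shapley value of $\widetilde{v}$.

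There is no genuine obstacle here: all the combinatorial work — condensing the roll-call sum over $\mathcal{S}_n\times J^n$ into a weighted sum of marginal contributions — was already carried out in the proof of Lemma~\ref{lemma_simplified_formula}, and the present statement only renames $C(v,\cdot)$ as $\widetilde{v}$ and invokes the definition of the Shapley value. The sole items requiring care are the coefficient-for-coefficient verification that Equations~(\ref{eq_def_C_function}) and (\ref{eq_aver_tu_j_k}) agree and the trivial grounding $\widetilde{v}(\emptyset)=0$; once these are in place the conclusion is immediate.
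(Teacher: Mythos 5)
Your proposal is correct and matches the paper's own reasoning: the paper presents Theorem~\ref{thm_aver_form_SSI} as an immediate restatement of Lemma~\ref{lemma_simplified_formula}, since the set function $C(v,\cdot)$ in Equation~(\ref{eq_def_C_function}) is definitionally identical to $\widetilde{v}$ in Equation~(\ref{eq_aver_tu_j_k}), and Equation~(\ref{eq_roll_call_jk_simple_games_simplified}) is precisely the Shapley value formula applied to that TU game. Your additional check that $\widetilde{v}(\emptyset)=0$ is the same observation the paper makes when introducing the average game operator, so nothing is missing.
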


For the $(j,k)$ simple game $v$ from Example~\ref{ex_j_k_simple_game} the average simple game is given by
$$
  \tilde{v}(\emptyset)=0, \tilde{v}(\{1\})=\frac{1}{2}, \tilde{v}(\{2\})=\frac{2}{3},\text{ and }\tilde{v}(N)=1.
$$

Before giving some properties of the average game operator we note that two distinct $(j,k)$ simple games may have the same average game, as illustrated in the following example.
\begin{example}
  \label{ex_different_j_k_same_average}
  Consider the $(j, k)$ simple games $u,v\in\mathcal{U}_n$ defined by
  \begin{itemize}
    \item $u(x)=k-1$ if $x=\mathbf{1}$ and $u(x)=0$ otherwise;
    \item $v(x)=k-1$ if $x\neq \mathbf{0}$ and $v(x)=0$ otherwise
  \end{itemize}
  for all $x\in J^n$. Obviously, $u\neq v$. A simple calculation, using Equation~(\ref{eq_aver_tu_j_k}),  gives $\tilde{u}(S)=\tilde{v}(S)=\displaystyle \frac{1}{j^{n-s}}$
  for all $S\in 2^{N}$.
\end{example}

The average game operator has some nice properties among which are the following:
\begin{proposition}
  \label{prop_average_game}
  Given a $(j,k)$ simple game $v\in\mathcal{U}_n$,
  \begin{itemize}
    \item[(a)] $\widetilde{v}$ is a TU game on $N$ that is $\left[0,1\right]$-valued and monotone;
    \item[(b)] any null player in $v$ is a null player in  $\widetilde{v}$;
    \item[(c)] any two equivalent players in $v$ are equivalent in $\widetilde{v}$;
    \item[(d)] if $v=\sum_{t=1}^p\alpha_tv_t$ is a convex combination for some $v_1,\dots,v_p\in \mathcal{U}_n$, then $\widetilde{v}=\sum_{t=1}^p\alpha_t\widetilde{v_t}$.
  \end{itemize}
\end{proposition}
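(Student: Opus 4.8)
The plan is to verify each of the four claims (a)--(d) directly from the defining formula~\eqref{eq_aver_tu_j_k}, since the average game operator is explicitly given as a normalized sum of differences of $v$-values. Throughout I would exploit the fact that the summand $v(\mathbf{(j-1)}_S,x_{-S})-v(\mathbf{0}_S,x_{-S})$ is the key object, and that monotonicity of $v$ controls its sign and range.

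First, for part~(a) I would check $\widetilde{v}(\emptyset)=0$ (the two arguments coincide when $S=\emptyset$) so $\widetilde{v}$ is a genuine TU game. For the $[0,1]$-bound and monotonicity, each summand lies in $\{0,\dots,k-1\}$ by monotonicity of $v$ (since $\mathbf{0}_S\le\mathbf{(j-1)}_S$), giving $0\le\widetilde{v}(S)$; the upper bound $\widetilde{v}(S)\le 1$ follows because each difference is at most $k-1$ and there are $j^n$ terms, so the normalization by $j^n(k-1)$ caps the total at $1$. For monotonicity of $\widetilde{v}$, I would fix $S\subseteq T$ and compare the two sums term-by-term after a reindexing of the summation variable $x$; the natural approach is to show $\widetilde{v}(T)-\widetilde{v}(S)\ge 0$ by writing it as an average of differences $v(\mathbf{(j-1)}_T,\dots)-v(\mathbf{(j-1)}_S,\dots)$ plus $v(\mathbf{0}_S,\dots)-v(\mathbf{0}_T,\dots)$, each nonnegative by monotonicity of $v$ (raising the $T\setminus S$ coordinates from whatever they are to $j-1$, and lowering them to $0$ respectively). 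Some care is needed because the dummy coordinates $x_{-T}$ versus $x_{-S}$ range over different index sets, so I would re-sum over the freed coordinates to make the comparison clean.

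For part~(b), if $i$ is a null player then by Definition~\ref{def_null_player_j_k} changing coordinate $i$ never changes $v$; hence for every $x$ and every $S$, setting coordinate $i$ to $j-1$ or to $0$ leaves $\widetilde{v}$ unaffected, i.e.\ $\widetilde{v}(S)=\widetilde{v}(S\setminus\{i\})=\widetilde{v}(S\cup\{i\})$, which is exactly the null-player condition for the TU game $\widetilde{v}$. For part~(c), equivalence of $i,h$ in $v$ means $v(x)=v(\pi_{ih}x)$ for all $x$; I would apply the permutation $\pi_{ih}$ simultaneously to the profile $x$ and to the coalition $S$ inside~\eqref{eq_aver_tu_j_k}, using that $\pi_{ih}$ merely permutes the summation variable $x$ over $J^n$ bijectively, to conclude $\widetilde{v}(S)=\widetilde{v}(\pi_{ih}S)$, which is TU-equivalence. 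Part~(d) is the easiest: formula~\eqref{eq_aver_tu_j_k} is manifestly linear in $v$, so $\widetilde{\phantom{v}}$ commutes with the convex combination $\sum_t\alpha_t v_t$ termwise.

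The main obstacle I anticipate is the monotonicity claim in part~(a): unlike the other parts, it is not a pure linearity or relabeling argument but requires genuinely comparing $\widetilde{v}(S)$ and $\widetilde{v}(T)$ for $S\subsetneq T$, where the two defining sums run over profiles indexed by the \emph{different} complementary sets $N\setminus S$ and $N\setminus T$. The delicate bookkeeping is to average over the $T\setminus S$ coordinates correctly so that the four terms pair up into manifestly nonnegative monotonicity increments of $v$; once the reindexing is set up, each paired difference is nonnegative and the conclusion is immediate. Everything else reduces to the observation that~\eqref{eq_aver_tu_j_k} is a nonnegative linear functional of the increments of $v$, so the symmetry, null-player, and linearity properties of $v$ transfer to $\widetilde{v}$ automatically.
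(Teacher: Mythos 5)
Your proposal is correct and takes essentially the same approach as the paper's proof: direct termwise verification from the defining formula, with monotonicity of $v$ giving part (a), the null-player and transposition identities giving parts (b) and (c), and linearity of the formula giving part (d). The one remark worth making is that the obstacle you anticipate in (a) is not actually there: Equation~(\ref{eq_aver_tu_j_k}) sums over \emph{all} $x\in J^n$ for every coalition (the coordinates in $S$ are simply overwritten), so $\widetilde{v}(S)$ and $\widetilde{v}(T)$ are sums over the same index set with the same normalization, and your pairing of the four terms into nonnegative monotonicity increments of $v$ applies termwise for each fixed $x$ with no reindexing needed --- which is exactly how the paper argues.
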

\begin{proof}
Let $v\in \mathcal{U}_n$.
All mentioned properties of $\widetilde{v}$ are more or less transfered from the corresponding properties of $v$ via Equation~(\ref{eq_aver_tu_j_k}).
More precisely:
\begin{itemize}
  \item[(a)]  Note that $\widetilde{v}(\emptyset)=\frac{1}{j^{n}(k-1)}\sum_{x\in J^{n}}{\left[v(x)-v(x)\right]}=0$ and
              $\widetilde{v}(N)=\frac{1}{j^{n}(k-1)}\sum_{x\in J^{n}}{[v(\mathbf{(j-1)})-v(\mathbf{0})]}=1$. Since $v$ is monotone and
              $\mathbf{0}\le x\le\mathbf{(j-1)}$ for all $x\in J^n$, we have $v(\mathbf{(j-1)}_S,x_{-S})\le v(\mathbf{(j-1)}_S,x_{-T})$ and
              $v(\mathbf{0}_S,x_{-S})\ge v(\mathbf{0}_S,x_{-T})$ for all $\emptyset \subseteq S\subseteq T\subseteq N$. Thus, we can
              conclude $0\le \widetilde{v}(S)\le \widetilde{v}(T)\le 1$ from Equation~(\ref{eq_aver_tu_j_k}).
  \item[(b)]  Let $i\in N$ be a null player in $v$ and $S\subseteq N\backslash\{i\}$. Since
              $v(\mathbf{(j-1)}_{S\cup\{i\}},x_{-(S\cup\{i\})})=v(\mathbf{(j-1)}_S\,,x_{-S})$ and
              $v(\mathbf{0}_{S\cup\{i\}},x_{-(S\cup\{i\})})=v(\mathbf{0}_S\,,x_{-S})$, we have that $\widetilde{v}(S\cup\{i\})=
              \widetilde{v}(S)$, i.e., player $i$ is a null player in $\widetilde{v}$.
  \item[(c)]  Let $i,h\in N$ be two equivalent players in $v$, $S\subseteq N\backslash\{i, h\}$, and $\pi_{ih}\in\mathcal{S}_n$ the transposition that interchanges
              $i$ and $h$. Since $v(\mathbf{(j-1)} _{S\cup\{i\}},x_{-(S\cup\{i\})})=v(\mathbf{(j-1)}_{S\cup\{h\}}\,,\left(\pi_{ih}x\right)_{-S\cup\{h\}})$ and
              $v(\mathbf{0} _{S\cup\{i\}},x_{-(S\cup\{i\})})=v(\mathbf{0}_{S\cup\{h\}}\,,\left(\pi_{ih}x\right)_{-S\cup\{h\}})$, we have $\widetilde{v}(S\cup\{i\})=
              \widetilde{v}(S\cup\{h\})$, i.e., players $i$ and $h$ are equivalent in $\widetilde{v}$.
  \item[(d)] Now suppose that $v=\sum_{t=1}^{p}{\alpha_t v_t}$ is a convex combination for some $v_1, v_2,\cdots, v_p \in \mathcal{U}_n$. Since
             $v(\mathbf{(j-1)}_S\,,x_{-S})=\sum_{t=1}^{p}{\alpha_t v_t(\mathbf{(j-1)}_S\,,x_{-S})}$ and $v(\mathbf{0}_S\,,x_{-S})=\sum_{t=1}^{p}{\alpha_t v_t(\mathbf{0}_S\,,x_{-S})}$,
             Equation~(\ref{eq_aver_tu_j_k}) gives $\widetilde{v}(S)=\sum_{t=1}^{p}{\alpha_t \widetilde{v_t}}(S)$ for all $\emptyset\subseteq S\subseteq N$.
\end{itemize}
\end{proof}

The operator that associates each $(j,k)$ simple game $v$ with its average game $\widetilde{v}$ can be seen as a coalitional representation of $(j,k)$ simple games. Moreover,
Proposition \ref{prop_average_game} suggests that this representation preserves some properties of the initial game. The average game of a $(j,k)$ simple game with a point-veto
is provided by:

\begin{proposition}
  \label{prop_aver_veto_j_k}
  Given $a\in J^n\backslash \{\mathbf{0}\}$, the average game $\widetilde{u^a}$ satisfies for every coalition $S\neq N$
  \[
\widetilde{u^{a}}\left( S\right) =\displaystyle\left\{
\begin{array}{ccc}
\displaystyle\prod_{i\in N\backslash S} {\displaystyle \left(\frac{j-a_i}{
j}\right)} & \text{if} & S\cap N^{a}\neq \emptyset  \\
0 & \text{if} & S\cap N^{a}= \emptyset
\end{array}%
\right.
\]%
\end{proposition}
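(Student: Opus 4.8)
The plan is to expand $\widetilde{u^a}(S)$ directly from Definition~\ref{def_aver_tu_j_k}, read off the two conditions under which each of the summands $u^a(\mathbf{(j-1)}_S,x_{-S})$ and $u^a(\mathbf{0}_S,x_{-S})$ takes the value $k-1$ rather than $0$, and then split the argument according to whether $S$ meets the set of vetoers $N^a$.

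First I would record the two conditions from Definition~\ref{def_point_veto}. The first summand $u^a(\mathbf{(j-1)}_S,x_{-S})$ equals $k-1$ exactly when $a\le(\mathbf{(j-1)}_S,x_{-S})$; since $a_i\le j-1$ holds automatically for $i\in S$, this reduces to the single requirement $a_i\le x_i$ for all $i\in N\backslash S$. Likewise the second summand $u^a(\mathbf{0}_S,x_{-S})$ equals $k-1$ exactly when $a\le(\mathbf{0}_S,x_{-S})$, which requires both $a_i=0$ for all $i\in S$ (equivalently $S\cap N^a=\emptyset$, by the definition of $N^a$) and $a_i\le x_i$ for all $i\in N\backslash S$.

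Next comes the case split. If $S\cap N^a=\emptyset$, then $a_i=0$ for every $i\in S$, so the two conditions above coincide; the two summands are then equal for every $x\in J^n$, their difference vanishes termwise, and $\widetilde{u^a}(S)=0$. If instead $S\cap N^a\neq\emptyset$, the condition for the second summand fails for every $x$, so that summand is identically $0$ and we are left with
$$\widetilde{u^a}(S)=\frac{1}{j^n(k-1)}\sum_{x\in J^n}u^a(\mathbf{(j-1)}_S,x_{-S}).$$

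The remaining step — the only place where a genuine, though elementary, computation is needed — is to count the profiles contributing $k-1$. Since $u^a(\mathbf{(j-1)}_S,x_{-S})$ does not depend on $x_S$ and equals $k-1$ precisely when $a_i\le x_i$ for each $i\in N\backslash S$, the coordinates split independently: each $i\in S$ is free ($j$ choices), while each $i\in N\backslash S$ admits exactly the $j-a_i$ values in $\{a_i,\dots,j-1\}$. Hence the sum equals $(k-1)\,j^{|S|}\prod_{i\in N\backslash S}(j-a_i)$, and after substituting, the factor $k-1$ cancels while $j^{|S|}/j^n=1/j^{|N\backslash S|}$ distributes one factor $1/j$ per index of $N\backslash S$, yielding $\widetilde{u^a}(S)=\prod_{i\in N\backslash S}\frac{j-a_i}{j}$, as claimed. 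I do not anticipate any real obstacle: the argument is purely a matter of reading off the two monotonicity conditions and performing the product count correctly, and one checks that the stated exclusion $S\neq N$ is not even needed since the empty product returns $\widetilde{u^a}(N)=1$.
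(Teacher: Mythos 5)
Your proposal is correct and follows essentially the same route as the paper's proof: the same case split on $S\cap N^a$, the same observation that the two summands coincide (resp.\ that the $\mathbf{0}_S$-summand vanishes), and the same product count of profiles $x_{-S}$ with $a_{-S}\le x_{-S}$. Your closing remark that the exclusion $S\neq N$ is superfluous (the empty product correctly yields $\widetilde{u^a}(N)=1$, and $N\cap N^a\neq\emptyset$ since $a\neq\mathbf{0}$) is a valid small refinement of the statement.
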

\begin{proof}
  Let $a\in J^n\backslash \{\mathbf{0}\}$ and $\emptyset\subsetneq S\subsetneq N$.

  First  suppose that $S\cap N^{a}= \emptyset$. Then, for all $x\in J^n$ we have $a\leq (\mathbf{(j-1)}_S,\, x_{-S})$ iff $a\leq (\mathbf{0}_S,\, x_{-S})$. Thus,
  $u^a((\mathbf{(j-1)}_S,\, x_{-S}))=u^a((\mathbf{0}_S,\, x_{-S}))$. It then follows from (\ref{eq_aver_tu_j_k}) that $\widetilde{u^a}(S)=0$.

  Now suppose that $S\cap N^{a}\neq \emptyset$. Then, for all $x\in J^n$ we have $a\nleq(\mathbf{0}_S,\, x_{-S})$. Thus, $u^a((\mathbf{0}_S,\, x_{-S}))=0$. Note that
  $a\leq (\mathbf{(j-1)}_S,\, x_{-S})$ iff $a_{-S}\leq x_{-S}$. Hence,
  \begin{eqnarray*}
  \widetilde{u^a}(S) &=& \displaystyle\frac{1}{j^{n}(k-1)}\sum_{x\in J^{n}}{u^a(\mathbf{(j-1)}_S\,,x_{-S})}
  =  \displaystyle\frac{1}{j^{n-s}(k-1)}\sum_{x_{-S}\in J^{-S}}{u^a(\mathbf{(j-1)}_S\,,x_{-S})}  \\
  &=& \displaystyle\frac{1}{j^{n-s}(k-1)}\sum_{x_{-S}\in J^{-S}\, \wedge \,a_{-S}\leq x_{-S}}{u^a(\mathbf{(j-1)}_S\,,x_{-S})}\\
  &=&\displaystyle\frac{1}{k-1}\cdot (k-1)\displaystyle\frac{|\{x_{-S}\in J^{-S},\,  a_{-S}\leq x_{-S}\}|}{j^{n-s}}
  = \displaystyle\prod_{i\in N\backslash S}{\displaystyle\left(\frac{j-a_i}{j}\right)}.
  \end{eqnarray*}
\end{proof}

It may be interesting to check whether each $(j,k)$ simple game may be decomposed as a combination of $(j,k)$ simple game with a point-veto of the  form  $a\in \{0, j-1\}^n$. The
response is affirmative when one considers combinations between average games. Before we prove this, recall that the average game associated with each $(j,k)$ simple game is
a TU game on $N$. The set of all TU games on $N$ is vector space and a famous basis consists in all unanimity games $(\gamma_S)_{S\in 2^N}$, where $\gamma_S(T)=1$ if $S\subseteq T$
and $\gamma_S(T)=0$ otherwise.\footnote{The definition of unanimity games has already been given in the second paragraph of Section~\ref{sec_preliminaries}.}

In Definition~\ref{def_point_veto} we have introduced the notation $w^S=u^{a}$ for a coalition $S\in 2^N$, where $a\in J^n$ is specified by $a_i=j-1$ if $i\in S$ and  $a_i=0$ otherwise.

\begin{proposition}
  \label{prop_ws_tilde_gamma_s}
  For every coalition $C\in 2^N$, there exists a collection of real numbers $(y_S)_{S\in 2^C}$ such that
  $$\widetilde{w^C}=\sum_{S\in {2^C}}y_S\gamma_S.$$
\end{proposition}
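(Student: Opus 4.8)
The plan is to exploit the well-known fact that the unanimity games $(\gamma_S)_{S\in 2^N}$ form a basis of the vector space of TU games on $N$, so that $\widetilde{w^C}$ already admits a \emph{unique} expansion $\widetilde{w^C}=\sum_{S\in 2^N}y_S\gamma_S$ with Harsanyi dividends $y_S=\sum_{T\subseteq S}(-1)^{|S|-|T|}\,\widetilde{w^C}(T)$. The whole content of the statement is therefore that $y_S=0$ whenever $S\not\subseteq C$; equivalently, that only the coefficients indexed by $S\in 2^C$ can be nonzero. Thus it suffices to locate which dividends are forced to vanish.

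First I would identify $w^C$ with the point-veto game $u^a$ in which $a_i=j-1$ for $i\in C$ and $a_i=0$ otherwise, so that $N^a=C$. By Proposition~\ref{prop_null_and_equivalent_players} every player $i\in N\setminus C$ is then a null player of $w^C$, and by part~(b) of Proposition~\ref{prop_average_game} such a player remains a null player of the average game $\widetilde{w^C}$. Hence all players outside $C$ are null in the TU game $\widetilde{w^C}$.

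Second, I would record the elementary lemma that a null player $i$ of any TU game $g$ kills the dividend of every coalition containing it: fixing $S\ni i$ and pairing each $T\subseteq S\setminus\{i\}$ with $T\cup\{i\}$, nullity gives $g(T)=g(T\cup\{i\})$ while the signs $(-1)^{|S|-|T|}$ are opposite, so the two contributions to $y_S$ cancel and $y_S=0$. Applying this to each $i\in N\setminus C$ shows $y_S=0$ as soon as $S\cap(N\setminus C)\neq\emptyset$, i.e.\ whenever $S\not\subseteq C$. Only the terms with $S\in 2^C$ survive, which is exactly the claimed decomposition $\widetilde{w^C}=\sum_{S\in 2^C}y_S\gamma_S$.

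There is no serious obstacle here; the only delicate point is the sign-pairing in the dividend lemma. I would also note that the null-player cancellation is precisely the identity $\widetilde{w^C}(T)=\widetilde{w^C}(T\cup\{\ell\})$ for $\ell\notin C$, which follows directly from Proposition~\ref{prop_aver_veto_j_k}: since $(j-a_\ell)/j=1$, the value $\widetilde{w^C}(S)$ depends on $S$ only through $S\cap C$ (via the factor $S\cap N^a\neq\emptyset$ and the product $\prod_{i\in C\setminus S}(1/j)$). This yields a fully self-contained alternative proof that bypasses Proposition~\ref{prop_average_game} and computes the same cancellation straight from the explicit formula.
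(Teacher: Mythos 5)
Your proof is correct, and it shares the paper's overall skeleton: both arguments identify the players outside $C$ as null players of $w^C$ via Proposition~\ref{prop_null_and_equivalent_players}, transfer this to the average game $\widetilde{w^C}$ via Proposition~\ref{prop_average_game}(b), expand $\widetilde{w^C}$ in the unanimity basis, and conclude that the coefficients of coalitions not contained in $C$ must vanish. Where you differ is the mechanism for that last step. The paper proceeds by induction on the size of a coalition $T$ with $T\setminus C\neq\emptyset$: writing $T=K\cup\{i\}$ with $i\notin C$, the null-player identity $\widetilde{w^C}(T)=\widetilde{w^C}(K)$ together with the induction hypothesis forces $y_T=0$. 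You instead invoke the closed-form M\"obius/Harsanyi-dividend formula $y_S=\sum_{T\subseteq S}(-1)^{|S|-|T|}\,\widetilde{w^C}(T)$ and kill each $y_S$ with $S\not\subseteq C$ by the sign-pairing $T\leftrightarrow T\cup\{i\}$; this is the standard ``null players have zero dividends'' lemma, proved in one stroke rather than re-derived inductively. The two routes are mathematically equivalent --- the paper's induction is in effect an inline proof of your dividend lemma --- but yours is shorter, avoids the case distinction on $|T|$, and makes explicit that the coefficients are the (unique) Harsanyi dividends. Your closing observation adds something the paper does not have: deriving the null-player identity $\widetilde{w^C}(S)=\widetilde{w^C}(S\cup\{\ell\})$ for $\ell\notin C$ directly from the product formula of Proposition~\ref{prop_aver_veto_j_k} (the factor $(j-a_\ell)/j$ equals $1$) gives a self-contained computation that bypasses Proposition~\ref{prop_average_game}(b) entirely; the only point to flag there is that Proposition~\ref{prop_aver_veto_j_k} is stated only for $S\neq N$, so the case $S\cup\{\ell\}=N$ needs the separate (trivial) check $\widetilde{w^C}(N)=1=\widetilde{w^C}(N\setminus\{\ell\})$.
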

\begin{proof}
  Note that $\widetilde{w^C}$ is a TU game on $N$. Therefore, for some real numbers $(y_S)_{S\in 2^N}$ we have
  \begin{equation}
  \label{eq_ws_tilde_gamma_s_1}
  \widetilde{w^C}=\sum_{S\in {2^N}}y_S\gamma_S.
  \end{equation}
  This proves the result for $C=N$. Now, suppose that $C\neq N$. Consider $\mathcal{ E}_k=\{T\in 2^{N}, T\backslash C\neq \emptyset\quad\text{and}\quad |T|=k\}$ for $1\leq k\leq n$.
  We prove by induction on $k$ that $y_{T}=0$ for all coalitions $T\in \mathcal{E}_k$. More formally, consider the assertion $\mathcal{P}(k): \,\text{for all}\, T\in \mathcal{ E}_k$,
  we have $y_T=0$.

  First assume that, $k=1$. Let $T\in \mathcal{E}_k$, then there exists $i\in N\backslash C$ such that $T=\{i\}$. Since player~$i$ is not contained in $C$,
  Proposition~\ref{prop_null_and_equivalent_players} and Proposition~\ref{prop_average_game} yield that $i$ is a null player in $\widetilde{w^C}$, so that
  $\widetilde{w^C}(T)=0$. Since Equation~$(\ref{eq_ws_tilde_gamma_s_1})$ gives $\widetilde{w^C}(T)=\sum_{S\in {2^{T}}}{y_S}=y_T$, we have $y_T=0$. Therefore
  $\mathcal{P}(1)$ holds. Now consider $2\leq k\leq n$ suppose and $\mathcal{P}(l)$ holds for all $1\leq l<k$. Let $T\in \mathcal{ E}_k$, then there exists
  $i\in N\backslash C$ such that $T=K\cup\{i\}$, $i\notin K \neq\emptyset $. Since  $i$ is a null player in $\widetilde{w^C}$, we have $\widetilde{w^C}(T)-\widetilde{w^C}(K)=0$.
  Using Equation~$(\ref{eq_ws_tilde_gamma_s_1})$ we compute:
  \begin{align*}
   \widetilde{w^C}(T)-\widetilde{w^C}(T)=& \sum_{S\in {2^T}}{y_S}-\sum_{S\in {2^K}}{y_S}
  =  y_T+\sum_{i\in S\varsubsetneq T}{y_S}
  =  y_T
  \end{align*}
  using $S\backslash C\neq \emptyset$ and $1\leq |S|<|T|=k$. Thus, we have $y_T=0$, which proves our claim.
\end{proof}

\begin{proposition}
  \label{prop_decompo_of_u_a_tilde}
  For every $(j,k)$ simple game $u\in\mathcal{U}_n$, there exists a collection of real numbers $(x_S)_{S\in 2^N}$ such that
  \begin{equation}\label{eq u a tilde w s}
  \widetilde{u}=\sum_{S\in 2^N}x_S\widetilde{w^S}.
  \end{equation}
\end{proposition}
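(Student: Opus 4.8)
The plan is to exploit two facts already in place: $\widetilde u$ is a TU game (Proposition~\ref{prop_average_game}(a)), and Proposition~\ref{prop_ws_tilde_gamma_s} already expands each $\widetilde{w^S}$ in the unanimity basis $\{\gamma_R\}$. There are exactly $2^n-1$ nonempty coalitions $S$, which matches the dimension of the space of TU games on $N$, so the quickest route is to show that $\{\widetilde{w^S}:\emptyset\neq S\subseteq N\}$ is actually a \emph{basis}; then $\widetilde u$ is automatically a unique linear combination of the $\widetilde{w^S}$, and its coefficients are the desired $x_S$. By Proposition~\ref{prop_ws_tilde_gamma_s} the support of $\widetilde{w^S}$ in the unanimity basis lies in $\{R:R\subseteq S\}$, so ordering coalitions by inclusion makes the transition matrix triangular; it is invertible precisely when every diagonal entry $y^{(S)}_S$, the dividend of $\gamma_S$ in $\widetilde{w^S}$, is nonzero.

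First I would compute this diagonal entry. Proposition~\ref{prop_aver_veto_j_k} gives $\widetilde{w^S}(R)=j^{-(|S|-|R|)}$ for every nonempty $R\subseteq S$ and $\widetilde{w^S}(\emptyset)=0$, so Möbius inversion yields $y^{(S)}_S=\sum_{\emptyset\neq R\subseteq S}(-1)^{|S|-|R|}j^{-(|S|-|R|)}=\left(\tfrac{j-1}{j}\right)^{|S|}-\left(\tfrac{-1}{j}\right)^{|S|}$. For $j\ge 3$ this is strictly positive, the triangular matrix is invertible, $\{\widetilde{w^S}\}$ is a basis, and the proposition follows at once.

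The hard part is $j=2$, where $y^{(S)}_S=0$ for every $S$ of even cardinality: there the $\widetilde{w^S}$ become linearly dependent and fail to span all TU games, so the basis argument collapses. For this case I would argue structurally. The average operator $v\mapsto\widetilde v$ is linear by Equation~(\ref{eq_aver_tu_j_k}), so Proposition~\ref{prop_conv_decomp_j_k} together with Proposition~\ref{prop_average_game}(d) reduces the claim to the veto games $\widetilde{u^F}$; inclusion--exclusion on the disjunction defining $u^F$, using $\bigcap_{a\in G}\{a\le x\}=\{\max_{a\in G}a\le x\}$, gives $u^F=\sum_{\emptyset\neq G\subseteq F}(-1)^{|G|+1}u^{\max G}$, so by linearity it suffices to place each point-veto average $\widetilde{u^{b}}$ in the span. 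But when $j=2$ every threshold entry lies in $\{0,j-1\}$, so each $u^{b}$ is literally some $w^S$ and this last step is immediate; hence $\widetilde u$ is a real combination of the $\widetilde{w^S}$ in this case as well. I expect the $j=2$ degeneracy to be the only genuine obstacle, and this structural reduction (which also re-proves $j\ge 3$, since there the point-veto averages lie in the span by the basis argument, after restricting to the vetoer set $N^{b}$ outside of which all players are null by Propositions~\ref{prop_null_and_equivalent_players} and~\ref{prop_average_game}(b)) disposes of it cleanly.
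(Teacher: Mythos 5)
Your proof is correct, and on the main case it is essentially the paper's own argument: the paper likewise combines Proposition~\ref{prop_ws_tilde_gamma_s} (the support of $\widetilde{w^S}$ in the unanimity basis lies in $2^S$) with a M\"obius-inversion computation of the leading coefficient, only phrased as an induction on $|C|$ showing that each $\gamma_C$ lies in the span of $\left(\widetilde{w^S}\right)_{S\in 2^C}$, rather than as invertibility of a triangular transition matrix; the two formulations are interchangeable. (Incidentally, your diagonal value $\left(\tfrac{j-1}{j}\right)^{|S|}-\left(\tfrac{-1}{j}\right)^{|S|}$ is the correct one: the paper's computation of $\alpha_C$ uses $\left(\tfrac{j-1}{j}\right)^{c-s}$ where Proposition~\ref{prop_aver_veto_j_k} actually gives $\widetilde{w^C}(S)=j^{-(c-s)}$ for $S\subseteq C$; this slip does not affect the nonvanishing conclusion for $j\ge 3$.) The genuine divergence is the case $j=2$, which the paper dismisses in one sentence as ``straightforward since $J=\{0,1\}$'', while you supply an explicit argument: convex decomposition into veto games via Proposition~\ref{prop_conv_decomp_j_k} and Proposition~\ref{prop_average_game}(d), inclusion--exclusion to write each veto game as an integer combination of point-veto games, and the observation that for $j=2$ every point-veto game is literally some $w^S$. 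You also correctly identify \emph{why} a separate argument is needed there at all, namely that for $j=2$ the top dividend of $\widetilde{w^S}$ vanishes for $|S|$ even, so the $\widetilde{w^S}$ do not span all TU games --- a point the paper leaves implicit. A slightly shorter route for $j=2$ would be to note that any $u\colon\{0,1\}^n\to K$ with $u(\mathbf{0})=0$ is itself, pointwise, a linear combination of the games $w^S$ (the unanimity decomposition applied to the set function $T\mapsto u(\mathbf{1}_T,\mathbf{0}_{-T})$), after which linearity of the average operator finishes; but your reduction achieves the same thing. The only cosmetic weakness is your closing parenthetical claiming the structural reduction ``re-proves'' the case $j\ge 3$: since it invokes the basis argument for that case anyway, it adds nothing there --- harmless, but redundant.
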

\begin{proof}
The result is straightforward when $j=2$ since $J$ reduces to $J=\{0,1\}.$ In the rest of the proof, we assume that $j\geq 3.$  Note that all TU games on $N$ can be written
as a linear combination of unanimity games $(\gamma_S)_{S\in 2^N}$. It is then sufficient to only prove that each TU game $\gamma_C$ for $C\in 2^N$ is a linear combination
of the TU games $(\widetilde{w^S})_{S\in 2^C}.$ The proof is by induction on $1\leq k=|C| \leq n$. More precisely,  we prove the assertion $\mathcal{A}(k)$ that for all
$C\in 2^N$ such that $|C|\leq k$, there exists a collection $(z_S)_{S\in 2^C}$ such that
\begin{equation}\label{eq gamma and tilde w s}
\gamma_C=\sum_{S\in {2^C}}z_S\widetilde{w^S}.
\end{equation}

First assume that $k=1$.  Using Proposition~\ref{prop_aver_veto_j_k},  it can be easily checked that we have $\gamma_{\{i\}}=\widetilde{w^{\{i\}}}$ for all $i\in N$.
Therefore $\mathcal{A}(1)$ holds.  Now, consider a coalition $C$ such that $|C|=k\in \{2,\dots,n\}$ and assume that $\mathcal{A}(l)$ holds for all $l$ such that $1\leq l<k$.
By Proposition~\ref{prop_ws_tilde_gamma_s}, there exists some real numbers $(\alpha_S)_{S\in 2^C}$ and $(\beta_S)_{S\in 2^C\backslash \{C\}}$ such that
$$\widetilde{w^C}=\sum_{S\in {2^C}}\alpha_S\gamma_S=\alpha_C\gamma_C+\sum_{S\in {2^C\backslash \{C\}}}\alpha_S\gamma_S=\alpha_C\gamma_C+\sum_{S\in {2^C\backslash \{C\}}}\beta_S\widetilde{w^S}.$$
where the last equality holds by the induction hypothesis.
Moreover, $\alpha_C$ can be determined using Proposition~\ref{prop_aver_veto_j_k} for $c=|C|$ by:
$$\alpha_C=\sum_{S\in {2^C}}(-1)^{|C\backslash S|}\widetilde{w^C}(S)=\displaystyle\sum_{s=1}^c{(-1)^{c-s}\binom{c}{s}\left(\frac{j-1}{j}\right)^{c-s}}=\frac{1-(1-j)^{c}}{j^{c}}\neq 0\text{ since } j-1\geq 2.$$

Therefore we get
$$\gamma_C=\sum_{S\in {2^C}}z_S\widetilde{w^S}$$
where for all $S\in 2^C$,  $z_S=-\frac{1}{\alpha_C}$ if $S=C$ and $z_S=-\frac{\beta_S}{\alpha_C}$ otherwise. This gives $\mathcal{A}(k)$. In summary, each $\gamma_S,S\in 2^N$ is a
linear combination of $\widetilde{w^C}, C\in 2^N$. Thus, the proof is completed since $\widetilde{u}$ is a linear combination of $\gamma_S,S\in 2^N$.
\end{proof}

Before we continue, note that by Equation~(\ref{eq gamma and tilde w s}), for $C\in 2^N$ each TU game $\gamma_C$ is a linear combination of the TU games  
$\left(\widetilde{w^S}\right)_{S\in 2^N}.$ Since  $\left(\gamma_S\right)_{S\in 2^N}$ is a basis of the vector space of all TU games on $N$, it follows that 
$\left(\widetilde{w^S}\right)_{S\in 2^N}$ is also a basis of the vector space of all TU games on $N$.

\section{A characterization of the Shapley-Shubik index for $(j,k)$ simple games}
\label{sec_axiomatization_j_k_1}

As shown in Proposition~\ref{prop_parametric_power_index} the axioms of Definition~\ref{def_power_index_properties} are not sufficient to uniquely characterize the power
index $\Phi$ for the class of $(j,k)$ simple games. Therefore we introduce an additional axiom.

\begin{definition}
  \label{def_averagely_convex}
  A power index $F$ for $(j,k)$ simple games is \emph{averagely convex} (AC) if we always have
  \begin{equation}\label{eq_axiom_AC_jk_1}
    \sum_{t=1}^p\alpha_tF(u_t)=\sum_{t=1}^q\beta_tF(v_t)
  \end{equation}
  whenever
  \begin{equation}\label{eq_axiom_AC_jk_2}
    \sum_{t=1}^p\alpha_t\widetilde{u_t}=\sum_{t=1}^q\beta_t\widetilde{v_t},
  \end{equation}
  where $u_1,u_2,\dots,u_p,v_1,v_2,\dots,v_q\in  \mathcal{U}_n$ and $(\alpha_t)_{1\leq t \leq p}$, $(\beta_t)_{1\leq t \leq q}$ are non-negative numbers that sum to $1$ each.
\end{definition}

 One may motivate the axiom (AC) as follows. In a game, the \emph{a priori strength} of a coalition, given the profile of the other individuals, is the difference between
the outputs observed when all of her members respectively give each her \emph{maximum support} and her \emph{minimum support}. The \emph{average strength} game associates
each coalition with her expected strength when the profile of other individuals uniformly varies. Average convexity for power indices is the requirement that whenever the
average game of a game is a convex combination of the average games of two other games, then the same convex combination still applies for the power distributions.

We remark that the axiom of Average Convexity is much stronger than the axiom of Convexity. A minor technical point is that $\sum_{t=1}^p\alpha_t u_t$ as well as
$\sum_{t=1}^q\beta_t v_t$ do not need to be $(j,k)$ simple games. However, the more important issue is that
$$
  \widetilde{\sum_{t=1}^p\alpha_tu_t}\overset{\,\,\text{Proposition~\ref{prop_average_game}.(d)}\,\,}{=}\sum_{t=1}^p\alpha_t\widetilde{u_t}=\sum_{t=1}^q\beta_t\widetilde{v_t}
  \overset{\,\,\text{Proposition~\ref{prop_average_game}.(d)}\,\,}{=}\widetilde{\sum_{t=1}^q\beta_t v_t},
$$
i.e., Equation~(\ref{eq_axiom_AC_jk_2}), is far less restrictive than
$$
  \sum_{t=1}^p\alpha_t u_t=\sum_{t=1}^q\beta_t v_t
$$
since two different $(j,k)$ simple games may have the same average game, see Example~\ref{ex_different_j_k_same_average}.
Further evidence is given by the fact that the parametric power indices $\Phi^a$, defined in Equation~(\ref{eq_parametric_index}),
do not all satisfy (AC).

\begin{proposition}
  \label{prop_counter_example2}
  When $j\geq 3$, there exists some $a\in J^n$ such that $\Phi^{a}$ does not satisfy (AC).
\end{proposition}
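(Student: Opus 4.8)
The plan is to reduce average convexity to its weakest instance and then hit it with the two games that already appear in Example~\ref{ex_different_j_k_same_average}. Setting $p=q=1$ and $\alpha_1=\beta_1=1$ in Definition~\ref{def_averagely_convex}, the hypothesis~(\ref{eq_axiom_AC_jk_2}) becomes $\widetilde{u_1}=\widetilde{v_1}$ and the conclusion~(\ref{eq_axiom_AC_jk_1}) becomes $F(u_1)=F(v_1)$. Hence any power index satisfying (AC) must depend on a game only through its average game. Consequently, to show that $\Phi^a$ fails (AC) for some $a$, it suffices to produce a single $a\in J^n$ together with two games in $\mathcal{U}_n$ that have the same average game but are separated by $\Phi^a$.

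For the two games I would take exactly $u$ and $v$ from Example~\ref{ex_different_j_k_same_average}: there $\widetilde{u}=\widetilde{v}$ is already verified, while $u\neq v$ holds because $j\ge 3$. Assuming $n\ge 2$ (for $n=1$ one checks directly that $\Phi^a$ is constantly $1$, so nothing can fail), I would fix the non-constant vector $a=(j-1,0,\dots,0)\in J^n$ and compute the associated TU games $u_a$ and $v_a$ from the definition given just before Equation~(\ref{eq_parametric_index}). The crucial point is that both turn out to be unanimity games. Indeed, $u(\mathbf{(j-1)}_S,a_{-S})=k-1$ exactly when $a_i=j-1$ for all $i\notin S$, i.e.\ when $\{2,\dots,n\}\subseteq S$, while $u(\mathbf{0}_S,a_{-S})=0$ for every $S\neq\emptyset$; this gives $u_a=\gamma_{\{2,\dots,n\}}$. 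Likewise $v(\mathbf{(j-1)}_S,a_{-S})=k-1$ for all $S\neq\emptyset$, whereas $v(\mathbf{0}_S,a_{-S})=k-1$ exactly when some coordinate of $a_{-S}$ is nonzero, i.e.\ when $1\notin S$; this gives $v_a=\gamma_{\{1\}}$.

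Since $\Phi^a(w)$ is by definition the Shapley value of $w_a$, I would conclude using the classical fact that the Shapley value of a unanimity game $\gamma_C$ assigns $1/|C|$ to every member of $C$ and $0$ to the rest. This yields $\Phi^a(u)=\left(0,\tfrac{1}{n-1},\dots,\tfrac{1}{n-1}\right)$ and $\Phi^a(v)=(1,0,\dots,0)$, so in particular $\Phi^a_1(u)=0\neq 1=\Phi^a_1(v)$ even though $\widetilde{u}=\widetilde{v}$; by the first paragraph this contradicts (AC). The conceptual core is the reduction to $p=q=1$, which is immediate; the only real work is the short computation identifying $u_a$ and $v_a$ as unanimity games, and the one point that needs care is the bookkeeping for the complement $N\setminus S$ together with the empty-set cases. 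I do not anticipate any genuine obstacle beyond this.
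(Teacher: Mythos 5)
Your proof is correct (under the standing assumption $n\ge 2$, which you state explicitly and which the paper's own proof also needs implicitly), but it takes a genuinely different route. You exploit the weakest instance of (AC), namely $p=q=1$ (any index satisfying (AC) must factor through the average game), and then separate the two games of Example~\ref{ex_different_j_k_same_average} by $\Phi^a$ with the \emph{non-constant} parameter $a=(j-1,0,\dots,0)$; your identifications $u_a=\gamma_{\{2,\dots,n\}}$ and $v_a=\gamma_{\{1\}}$ check out, and you correctly read the symbol $\mathbf{1}$ in that example as full approval $\mathbf{(j-1)}$ (under a literal reading the example's $u$ would not even be monotone for $j\ge 3$). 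The paper instead keeps the parameter \emph{constant}, $a=(j-2,\dots,j-2)$, takes the point-veto game $u^b$ with $b=(1,j-1,0,\dots,0)$, and uses the $p=1$, $q=2$ instance of (AC) arising from the decomposition $\widetilde{u^b}=\frac{1}{j}\widetilde{w^{\{1\}}}+\frac{j-1}{j}\widetilde{w^{\{2\}}}$, comparing $\Phi^a(u^b)=(0,1,0,\dots,0)$ with $\left(\frac1j,\frac{j-1}{j},0,\dots,0\right)$. The difference matters in two ways. First, the hypothesis $j\ge 3$ is never used in your argument (it works verbatim for $j=2$), whereas it is essential in the paper's: for $j=2$ the only constant parameters are $\mathbf{0}$ and $\mathbf{1}$, and then $\Phi^{\mathbf{0}}=\Phi^{\mathbf{1}}=\Phi$ satisfies (AC). Second, and more importantly, this proposition is cited later to establish independence of the four axioms in Theorem~\ref{thm_charact_SSI_j_k}, where one needs an index satisfying (E), (S), (NP) but not (AC); Proposition~\ref{prop_parametric_power_index} guarantees (E), (S), (NP) only for constant $a$, and your $\Phi^{(j-1,0,\dots,0)}$ actually violates (S) -- for instance it assigns $(0,1,0,\dots,0)$ to $w^{\{1,2\}}$ although players $1$ and $2$ are equivalent there. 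So your argument proves the literal existence statement, and more simply, but the paper's constant-$a$ witness is the one that can also serve the independence argument; if you wanted your proof to fill that role too, you would have to rerun your computation with a constant $a$, which is where $j\ge 3$ becomes unavoidable.
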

\begin{proof}
  As in the proof of Proposition~\ref{prop_counter_example}, consider the $(j, k)$ simple game with point-veto $b=(1, j-1, 0,\cdots, 0)\in J^n$ and let
  $a=(j-2,j-2,\cdots,j-2)$. It can be easily checked that, for all subsets $T\subseteq N$ we have
  $$\widetilde{u^{b}}(T)=\left\{\begin{array}{ccl}
                                  1 & \text{if} & 1, 2\in T \\
                                  (j-1)/j & \text{if} & 2\in T\subseteq N\backslash\{1\}  \\
                                  1/j & \text{if} & 1\in T\subseteq N\backslash\{2\} \\
                                  0 & \text{if} & T\subseteq N\backslash\{1, 2\}
                                \end{array}\right.$$
  and that
                                \begin{equation}\label{eq-decomposition-example}
                                 \widetilde{u^b}=\displaystyle\frac{1}{j}\cdot \widetilde{w^{\{1\}}}+\displaystyle\frac{j-1}{j}\cdot \widetilde{ w^{\{2\}}}
                                \end{equation}
  holds. Since $\Phi^a$ satisfies (NP), (E), (S) we can easily compute $\Phi^a\left(w^{\{1\}}\right)=(1,\,0,\, \cdots,\, 0)$ and $\Phi^a\left(w^{\{2\}}\right)=(0,\,1,\,0,\, \cdots,\, 0)$.
  Therefore,
  \begin{equation}\label{Eq-AC-for-parametric-power}
  \displaystyle\frac{1}{j}\cdot \Phi^a\left(w^{\{1\}}\right)+\displaystyle\frac{j-1}{j}\cdot \Phi^a\left( w^{\{2\}}\right)=\left(\displaystyle \frac{1}{j},\, \displaystyle\frac{j-1}{j}, \,0,\,\cdots,\, 0\right).
  \end{equation}
 Using (\ref{eq_parametric_index}), one gets $\Phi^{a} (u^{b})=\left(0,\, 1,\, 0,\,\cdots, \,0\right)$. It then follows from equations (\ref{eq-decomposition-example}) and
 (\ref{Eq-AC-for-parametric-power}) that $\Phi^a$ does not satisfy (AC).
\end{proof}

As a preliminary step to our characterization result in Theorem~\ref{thm_charact_SSI_j_k} we state:
\begin{lemma}
  \label{lemma_E_S_NP}
  If a power index $F$ for the class $\mathcal{U}_n$ of $(j,k)$ simple games satisfies (E), (S), and (NP), then we have
  $F(w^C) =\Phi(w^C)$ for all $ C\in 2^N$.
\end{lemma}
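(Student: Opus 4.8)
The plan is to observe that the three axioms (E), (S), and (NP), together with the very simple symmetry structure of the game $w^C$, already determine the value $F(w^C)$ uniquely, and that $\Phi$ satisfies exactly these same three axioms; the two values must then coincide.

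First I would fix a non-empty coalition $C\in 2^N$. (For $C=\emptyset$ the object $w^C$ would be $u^{\mathbf{0}}$, which is not a valid $(j,k)$ simple game because it violates $v(\mathbf{0})=0$, so the statement is understood for $C\neq\emptyset$.) Recall that $w^C=u^a$ with $a_i=j-1$ for $i\in C$ and $a_i=0$ otherwise, so that $N^a=C$. By Proposition~\ref{prop_null_and_equivalent_players}, every player $i\in N\backslash C$ is a null player of $w^C$, and any two players $i,h\in C$ are equivalent in $w^C$, since $a_i=a_h=j-1$.

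Next I would apply the axioms to an arbitrary power index $F$ satisfying (E), (S), and (NP). The null player property (NP) gives $F_i(w^C)=0$ for every $i\in N\backslash C$. Symmetry (S) gives $F_i(w^C)=F_h(w^C)$ for all $i,h\in C$; denote this common value by $\lambda$. Efficiency (E) then yields $1=\sum_{i\in N}F_i(w^C)=|C|\cdot\lambda$, hence $\lambda=1/|C|$. Thus $F(w^C)$ is forced to be the vector carrying the entry $1/|C|$ on the coordinates in $C$ and $0$ on the coordinates in $N\backslash C$, a value that does not depend on $F$ beyond the three assumed axioms.

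Finally, since $\Phi$ itself satisfies (E), (S), and (NP) by Proposition~\ref{prop_power_index_properties}, the identical computation applies verbatim to $\Phi$ and produces the same vector. Therefore $F(w^C)=\Phi(w^C)$, as claimed. There is essentially no serious obstacle in this argument; the only points requiring care are the correct identification of the null and equivalence relations of $w^C$ through Proposition~\ref{prop_null_and_equivalent_players} and the bookkeeping of the degenerate case $C=\emptyset$.
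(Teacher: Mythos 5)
Your proof is correct and follows essentially the same route as the paper's: identify the null players and equivalent players of $w^C$ via Proposition~\ref{prop_null_and_equivalent_players}, then use (NP), (S), and (E) to force the common value $1/|C|$ for both $F$ and $\Phi$. Your additional remark on the degenerate case $C=\emptyset$ (which the paper silently ignores, since $u^{\mathbf{0}}$ is not a valid point-veto game) is a reasonable clarification but does not change the substance.
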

\begin{proof}
  Let $F$ be a power index on $\mathcal{U}_n$ that satisfies (E), (S), (NP) and let $C\in 2^N$ be arbitrary.

  According to Proposition~\ref{prop_null_and_equivalent_players}, all players $i,j\in C$ are equivalent in $w^C$ and those outside of $C$ are null
  players in the game $w^C$. Since both $F$ and $\Phi$ satisfy (E), (S), and (NP), we have $F_i(w^C) =\Phi_i(w^C)=\displaystyle\frac{1}{|C|}$ if $i\in C$ and
  $F_i(w^C) =\Phi_i(w^C)=0$ otherwise. 
\end{proof}

\begin{theorem}
  \label{thm_charact_SSI_j_k}
  A power index $F$ for the class $\mathcal{U}_n$ of $(j,k)$ simple games satisfies (E), (S), (NP), and (AC) if and only if $F=\Phi$.
\end{theorem}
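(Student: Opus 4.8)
The plan is to prove both directions, treating $F=\Phi$ first since it is the easier half. By Proposition~\ref{prop_power_index_properties} the index $\Phi$ already satisfies (E), (S), and (NP), so only (AC) needs checking. For this I would invoke Theorem~\ref{thm_aver_form_SSI}, which states that $\Phi(v)$ is the Shapley value of the average game $\widetilde{v}$. Since the Shapley value is a linear map on the vector space of TU games, the hypothesis $\sum_{t=1}^p\alpha_t\widetilde{u_t}=\sum_{t=1}^q\beta_t\widetilde{v_t}$ of (AC) transforms, upon applying the Shapley value to both sides and extracting the scalars, directly into $\sum_{t=1}^p\alpha_t\Phi(u_t)=\sum_{t=1}^q\beta_t\Phi(v_t)$. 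Thus $\Phi$ satisfies (AC) with no real computation.

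For the converse, assume $F$ satisfies (E), (S), (NP), and (AC), fix an arbitrary $u\in\mathcal{U}_n$, and aim to show $F(u)=\Phi(u)$. First I would apply Proposition~\ref{prop_decompo_of_u_a_tilde} to obtain real coefficients $(x_S)_S$ with $\widetilde{u}=\sum_S x_S\,\widetilde{w^S}$, the sum running over the point-veto games $w^S$ (all of which lie in $\mathcal{U}_n$). The decisive normalization is that $\widetilde{g}(N)=1$ for every $g\in\mathcal{U}_n$ by Proposition~\ref{prop_average_game}(a); evaluating the decomposition at $S=N$ therefore forces $\sum_S x_S=1$. This is precisely the identity that lets an a priori arbitrary linear combination be reshaped into the convex form demanded by (AC).

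The central step is that rearrangement. Splitting the indices by the sign of $x_S$ and writing $M=\sum_{S:\,x_S<0}(-x_S)$, I would move the negative terms to the other side to get
\[
  \widetilde{u}+\sum_{S:\,x_S<0}(-x_S)\,\widetilde{w^S}=\sum_{S:\,x_S>0}x_S\,\widetilde{w^S}.
\]
Because $\sum_S x_S=1$, the total weight on each side is $1+M$, so dividing through by $1+M$ yields non-negative coefficients summing to $1$ on both sides; this is exactly an instance of hypothesis~(\ref{eq_axiom_AC_jk_2}) of (AC), with the family $\{u\}\cup\{w^S:x_S<0\}$ on the left and $\{w^S:x_S>0\}$ on the right. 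Applying (AC) once to $F$ and once to $\Phi$ (legitimate, since $\Phi$ satisfies (AC) by the first part) gives two identities in $\mathbb{R}^n$. By Lemma~\ref{lemma_E_S_NP} we have $F(w^S)=\Phi(w^S)$ for every $S$, so subtracting the two identities cancels all the $w^S$-contributions and leaves $\tfrac{1}{1+M}F(u)=\tfrac{1}{1+M}\Phi(u)$, i.e.\ $F(u)=\Phi(u)$, as desired.

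The main obstacle I anticipate is the bookkeeping of the convex rearrangement rather than any deep idea: one must confirm that neither side of the (AC) hypothesis is empty (the left always carries $u$ with positive weight, and $\sum_S x_S=1>0$ guarantees at least one positive $x_S$ so the right is non-empty), and check that the degenerate case $M=0$, where no negative coefficients occur, is handled correctly. Once the normalization $\sum_S x_S=1$ is secured, the remainder is routine linear algebra combined with the agreement of $F$ and $\Phi$ on the point-veto games supplied by Lemma~\ref{lemma_E_S_NP}.
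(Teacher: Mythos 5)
Your proof is correct and takes essentially the same route as the paper's: the same decomposition $\widetilde{u}=\sum_{S}x_S\,\widetilde{w^S}$ from Proposition~\ref{prop_decompo_of_u_a_tilde}, the same sign-splitting rearrangement into an equality of two convex combinations (your normalizer $1+M$ is exactly the paper's $\varpi=\sum_{S\in E_1}x_S$), and the same final step applying (AC) to both $F$ and $\Phi$ together with Lemma~\ref{lemma_E_S_NP} on the games $w^S$. Your explicit verification that $\sum_S x_S=1$, obtained by evaluating the decomposition at $N$, is a small detail the paper leaves implicit when asserting both sides are convex combinations, but the argument is the same.
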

\begin{proof}
  \underline{\emph{Necessity}}: As shown in Proposition~\ref{prop_power_index_properties}, $\Phi$ satisfies (E), (S), and (NP). For (AC) the proof follows from
  Theorem~\ref{thm_aver_form_SSI} since  the average game operator is linear by Proposition \ref{prop_average_game}.

  \underline{\emph{Sufficiency}}: Consider a power index $F$ for $(j,k)$ simple games that satisfies (E), (S), (NP), and (AC). Next, consider an arbitrary $(j,k)$
  simple game $u\in\mathcal{U}_n$. By Proposition~\ref{prop_decompo_of_u_a_tilde}, there exists a collection of real numbers $(x_S)_{S\in 2^N}$ such that
\begin{equation}\label{eq u a tilde w s pos neg}
\widetilde{u}=\sum_{S\in 2^N}x_S\widetilde{w^S}=\sum_{S\in E_1}x_S\widetilde{w^S}+\sum_{S\in E_2}x_S\widetilde{w^S},
   \end{equation}
where $E_1=\{S\in 2^N:x_S>0\}$ and $E_2=\{S\in 2^N:x_S<0\}$. Note that $E_1\neq \emptyset$ since $\widetilde{u}(N)=1$. As an abbreviation we set
\begin{equation}\label{eq u a tilde w s pos neg pos}
\varpi=\sum_{S\in E_1}x_S\widetilde{w^S}(N)=\sum_{S\in E_1}x_S>0.
   \end{equation}
It follows that
\begin{equation}\label{eq u a tilde w s pos neg conv}
\frac{1}{\varpi}\widetilde{u}+\sum_{S\in E_2}\frac{-x_S}{\varpi}~\widetilde{w^S}=\sum_{S\in E_1}\frac{x_S}{\varpi}~\widetilde{w^S}.
   \end{equation}
 Since $(\ref{eq u a tilde w s pos neg conv})$ is an equality among two convex combinations, axiom (AC) yields
\begin{equation*}
\frac{1}{\varpi}F(u)+\sum_{S\in E_2}\frac{-x_S}{\varpi}~F(w^S)=\sum_{S\in E_1}\frac{x_S}{\varpi}~F(w^S).
   \end{equation*}
Therefore by Lemma \ref{lemma_E_S_NP},
\begin{equation}\label{eq u a tilde w s pos neg 3}
\frac{1}{\varpi}F(u)+\sum_{S\in E_2}\frac{-x_S}{\varpi}~\Phi(w^S)=\sum_{S\in E_1}\frac{x_S}{\varpi}~\Phi(w^S).
   \end{equation}
Since $\Phi$ also satisfies (AC), we obtain
\begin{equation}\label{eq u a tilde w s pos neg 4}
\frac{1}{\varpi}F(u)+\sum_{S\in E_2}\frac{-x_S}{\varpi}~\Phi(w^S)=\frac{1}{\varpi}\Phi(u)+\sum_{S\in E_2}\frac{-x_S}{\varpi}~\Phi(w^S),
   \end{equation}
so that $F(u)=\Phi(u)$.
\end{proof}
 
 \begin{proposition}
 For $j\ge 3$, the four axioms in Theorem \ref{thm_charact_SSI_j_k}  are independent.
\end{proposition}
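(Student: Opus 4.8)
The plan is the standard one for an independence result: for each of the four axioms I would produce a power index on $\mathcal{U}_n$ that satisfies the other three but violates the given one, so four witnesses are required (I assume $n\ge 2$, symmetry and the null-player property being vacuous for $n=1$). For the independence of (E) I take $F=2\Phi$, i.e.\ $F_i(v)=2\Phi_i(v)$. By Proposition~\ref{prop_power_index_properties} the index $\Phi$ satisfies (S) and (NP), and both are clearly preserved under multiplication by a positive scalar; moreover any identity of the form \eqref{eq_axiom_AC_jk_1} remains valid when both sides are multiplied by $2$, so $F$ still satisfies (AC). However $\sum_{i\in N}F_i(v)=2\ne 1$, so (E) fails. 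For the independence of (NP) I take the egalitarian index $F_i(v)=1/n$ for all $i$ and all $v$: it is efficient and gives every player the same value, so (E) and (S) hold, and since $F$ is constant while the coefficients in any convex combination sum to $1$, both sides of \eqref{eq_axiom_AC_jk_1} collapse to $\mathbf{1}/n$ and (AC) holds trivially; yet in $w^{\{1\}}$ the players $2,\dots,n$ are null by Proposition~\ref{prop_null_and_equivalent_players} but receive $1/n>0$, so (NP) fails.

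For the independence of (S) I pass through the average game and read off a single marginal vector: set
\[
  F_i(v)=\widetilde{v}(\{1,\dots,i\})-\widetilde{v}(\{1,\dots,i-1\}).
\]
Telescoping together with $\widetilde{v}(N)=1$ and $\widetilde{v}(\emptyset)=0$ (Proposition~\ref{prop_average_game}(a)) gives $\sum_{i\in N}F_i(v)=1$, so (E) holds; if $i$ is a null player of $v$ then it is a null player of $\widetilde{v}$ by Proposition~\ref{prop_average_game}(b), whence $F_i(v)=0$ and (NP) holds; and $F$ is a linear functional of $\widetilde{v}$ while $v\mapsto\widetilde{v}$ is linear by Proposition~\ref{prop_average_game}(d), so exactly as in the necessity part of Theorem~\ref{thm_charact_SSI_j_k} the index $F$ satisfies (AC). To see that (S) fails I evaluate on $w^N$, where players $1$ and $2$ are equivalent by Proposition~\ref{prop_null_and_equivalent_players}: Proposition~\ref{prop_aver_veto_j_k} yields $F_1(w^N)=j^{-(n-1)}$ and $F_2(w^N)=j^{-(n-1)}(j-1)$, which differ precisely because $j\ge 3$.

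The remaining and only delicate case is the independence of (AC), and this is exactly where the hypothesis $j\ge 3$ is indispensable. Here I reuse the parametric index $\Phi^{a}$ from \eqref{eq_parametric_index} with the constant vector $a=(j-2,\dots,j-2)\in J^n$. Because all coordinates of $a$ coincide, Proposition~\ref{prop_parametric_power_index} guarantees that $\Phi^{a}$ satisfies (E), (S) and (NP), while Proposition~\ref{prop_counter_example2} shows that for $j\ge 3$ this same $\Phi^{a}$ violates (AC), the decomposition \eqref{eq-decomposition-example} of $\widetilde{u^b}$ being respected by $\Phi$ but not by $\Phi^{a}$. The condition $j\ge 3$ enters decisively: it makes $a$ a genuine nonzero threshold and it forces $\Phi^{a}\ne\Phi$, whereas for $j=2$ one has $\Phi^{a}=\Phi$ and the construction degenerates, which is why the statement is confined to $j\ge 3$. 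Having exhibited one witness violating each single axiom while satisfying the other three, the four axioms are independent.
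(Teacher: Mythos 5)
Your proof is correct and follows the same overall strategy as the paper: one witness per axiom. Three of your witnesses essentially match the paper's: $2\Phi$ for (E); the parametric index $\Phi^{a}$ with $a=(j-2,\dots,j-2)$ for (AC), citing Propositions~\ref{prop_parametric_power_index} and~\ref{prop_counter_example2} exactly as the paper does; and for (NP) you take the pure equal-division index where the paper takes $\tfrac{1}{2}\Phi+\tfrac{1}{2}\operatorname{ED}$ --- both are valid, yours being marginally simpler. The genuinely different step is your witness for (S). The paper expands $\widetilde{u}=\sum_{S\in 2^N}x_S^u\,\widetilde{w^S}$ in the basis $\left(\widetilde{w^S}\right)_{S\in 2^N}$ (whose basis property is the remark following Proposition~\ref{prop_decompo_of_u_a_tilde}), sets $F(u)=\sum_{S}x_S^u F\!\left(w^S\right)$ with $F\!\left(w^S\right)=\Phi\!\left(w^S\right)$ for $S\neq N$ and an asymmetric allocation $\left(\tfrac{2}{n+1},\tfrac{1}{n+1},\dots,\tfrac{1}{n+1}\right)$ at $w^N$, and declares (E), (NP), (AC) to be easily checked; in fact (NP) for that construction requires the additional observation that $x_S^u=0$ whenever $S$ contains a null player of $u$, which rests on the uniqueness of the coordinates. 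Your index --- the marginal-contribution vector of the average game along the fixed order $1,2,\dots,n$ --- needs no decomposition at all and makes every verification immediate: (E) by telescoping, (NP) directly from Proposition~\ref{prop_average_game}(b), (AC) because $F$ is a linear functional of $\widetilde{v}$ alone, and the failure of (S) is explicit at $w^N$ via Proposition~\ref{prop_aver_veto_j_k}, namely $F_1(w^N)=j^{-(n-1)}\neq (j-1)\,j^{-(n-1)}=F_2(w^N)$ for $j\ge 3$. The one trade-off is that your (S) counterexample genuinely uses $j\ge 3$ (for $j=2$ the two marginal contributions you compare coincide), whereas the paper's asymmetric construction violates (S) for every $j\ge 2$; since the proposition is stated only for $j\ge 3$, this costs nothing. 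Your explicit caveat $n\ge 2$ (without which (NP) and (S) have no content) is a point the paper leaves implicit.
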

\begin{proof} 
  For each of the four axioms in Theorem \ref{thm_charact_SSI_j_k}, we provide a power index on $\mathcal{U}_n$ that meets the three other axioms but not the chosen one.
\begin{itemize}
  \item The power index $2\cdot \Phi$ satisfies (NP), (S), and (AC) but not (E).
  \item Denote by $\operatorname{ED}$ the \emph{equal division} power index which assigns $\tfrac{1}{n}$ to
        each player for every $(j, k)$ simple game $v$. Then, the power index $\frac{1}{2}\cdot\Phi +\frac{1}{2}\cdot\operatorname{ED}$
        satisfies (E), (S) and (AC), but not (NP).
  \item In Proposition~\ref{prop_parametric_power_index} we have constructed a parametric series of power indices that satisfiy  (E), (S), and (NP). 
        For $j\ge 3$, at least one example does not satisfy (AC), see Proposition~\ref{prop_counter_example2}.
  \item Recall that $\left(\widetilde{w^S}\right)_{S\in 2^N}$ is a basis of the vector space of all TU games on $N$. Thus given a $(j,k)$-simple game $u$, there exists  
        a unique collection of real numbers $\left(x_S^u\right)_{S\in 2^N}$ such that 
        \begin{equation}\label{eq u a tilde w s bis 1}
          \widetilde{u}=\sum_{S\in 2^N}x_S^u \widetilde{w^S}.
        \end{equation}
        Consider some $i_0\in N$ and set 
  \begin{equation}\label{eq u a tilde w s bis 2}
 F( u)=\sum_{S\in 2^N}x_S^u\cdot F\!\left(w^S\right).
  \end{equation}
  For each $S\in 2^N\backslash \{N\}$ we set $F_{i}\!\left(w^{S}\right)=\Phi\!\left(w^S\right)$. For $S=N$ we set $F_i\!\left(w^N\right)=\displaystyle\frac{2}{n+1}$ if $i=i_0$ and 
  $F_i\!\left(w^{N}\right)=\displaystyle\frac{1}{n+1}$ otherwise. We can easily check that $F$ satisfies (E), (NP), (AC), but not (S).
      \end{itemize}
This proves that the four axioms in Theorem \ref{thm_charact_SSI_j_k}  are independent.
\end{proof}
      
\section{Axiomatization of the Shapley-Shubik index for interval simple games}
\label{sec_axiomatization_interval_simple_games}

Similar as for $(j,k)$ simple games a Shapley-Shubik like index for interval simple games can be constructed from the idea of the roll-call model.

\begin{definition}(cf.~\cite[Definition 6.2]{kurz2014measuring})\\
  Let $v$ be an interval simple game with player set $N$ and $i\in N$ an arbitrary player. We set
  \begin{eqnarray}
  \Psi_i(v) &=& \frac{1}{n!}\sum_{\pi\in\mathcal{S}_n}\int_0^1\dots\int_0^1 \left[v(x_{\pi_{<i}},\mathbf{1}_{\pi_{\ge i}})-v(x_{\pi_{<i}},\mathbf{0}_{\pi_{\ge i}})\right]\notag\\
  &&-\left[v(x_{\pi_{\le i}},\mathbf{1}_{\pi_{> i}})-v(x_{\pi_{\le i}},\mathbf{0}_{\pi_{> i}})\right]\operatorname{d}x_1\dots \operatorname{d}x_n.\label{eq_def_ssi_interval_sg}
  \end{eqnarray}\label{def_ssi_interval_sg}
\end{definition}

In this section, we give a similar axiomatization for $\Psi$ (for interval simple games) as we did for $(j,k)$ simple games and $\Phi$. By a power index for
interval simple games we understand a mapping from the set of interval simple games for $n$ players to $\mathbb{R}^n$. Replacing $J$ by $I=[0,1]$ in
Definition~\ref{def_power_index_properties}, allows us to directly transfer the properties of power indices for $(j,k)$ simple games to the present situation.
Also Proposition~\ref{prop_power_index_properties} is valid for interval simple games and $\Psi$. More precisely, $\Psi$ satisfies
(P), (A), (S), (E), (NP), and (T), see \cite[Lemma 6.1]{kurz2018importance}. The proof for (C) and (L) goes along the same lines as the proof of
Proposition~\ref{prop_power_index_properties}. Also the generalization of the power index to a parametric class can be done just as the one for
$(j,k)$ simple games in Equation~(\ref{eq_parametric_index}).

\begin{proposition}
  \label{prop_generalized_measure_point}
  For every $\alpha\in[0,1]$ the mapping $\Psi^a$, where $a=(\alpha,\dots,\alpha)\in[0,1]^n$, defined by
  $$
  \Psi^a_i(v) = \frac{1}{n!}\sum_{\pi\in\mathcal{S}_n}\Big(\left[v(a_{\pi_{<i}},\mathbf{1}_{\pi_{\ge i}})-v(a_{\pi_{<i}},\mathbf{0}_{\pi_{\ge i}})\right]
  -\left[v(a_{\pi_{\le i}},\mathbf{1}_{\pi_{> i}})-v(a_{\pi_{\le i}},\mathbf{0}_{\pi_{> i}})\right]\Big)
  $$
  for all $i\in N$, satisfies (P), (A), (S), (E), (NP), (T), (C), and (L).
\end{proposition}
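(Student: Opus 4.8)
The plan is to mirror the proof of Proposition~\ref{prop_power_index_properties} (equivalently, the transcription already used for Proposition~\ref{prop_parametric_power_index}), replacing the average over profiles by evaluation at the single profile $a=(\alpha,\dots,\alpha)$. The starting observation is that each summand in the definition of $\Psi^a_i(v)$ can be regrouped as
$$
\bigl[v(a_{\pi_{<i}},\mathbf{1}_{\pi_{\ge i}})-v(a_{\pi_{\le i}},\mathbf{1}_{\pi_{> i}})\bigr]+\bigl[v(a_{\pi_{\le i}},\mathbf{0}_{\pi_{> i}})-v(a_{\pi_{<i}},\mathbf{0}_{\pi_{\ge i}})\bigr],
$$
where within each bracket the two arguments differ only in the $i$-th coordinate, which drops from $1$ to $\alpha$ in the first bracket and from $\alpha$ to $0$ in the second. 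Monotonicity of $v$ together with $0\le\alpha\le 1$ then makes both brackets nonnegative, giving $\Psi^a_i(v)\ge 0$; once efficiency is established this also yields $\Psi^a(v)\neq\mathbf{0}$, proving (P).

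For (E), I would introduce the prefix coalitions $\pi|h$ (the first $h$ players in the ordering $\pi$), exactly as in the proof of Proposition~\ref{prop_power_index_properties}, and note that for the player $i$ with $\pi(i)=h$ the summand equals $g_\pi(h-1)-g_\pi(h)$, where $g_\pi(h)=v(a_{\pi|h},\mathbf{1}_{-\pi|h})-v(a_{\pi|h},\mathbf{0}_{-\pi|h})$. Summing over all players (equivalently over $h=1,\dots,n$) telescopes to $g_\pi(0)-g_\pi(n)$. The boundary evaluations are clean: $g_\pi(0)=v(\mathbf{1})-v(\mathbf{0})=1$ and $g_\pi(n)=v(a)-v(a)=0$, so the inner sum equals $1$ for every $\pi$, and hence $\sum_{i\in N}\Psi^a_i(v)=\tfrac{1}{n!}\sum_{\pi}1=1$. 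The one place that needs a little care is that the two copies of $v(a)$ appearing at the top of the telescope cancel exactly because all entries of $a$ coincide; this is precisely where the hypothesis $a=(\alpha,\dots,\alpha)$ enters, and it is the only substantive point in the argument.

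The remaining axioms follow essentially verbatim from the template. Anonymity (A) holds because the defining sum ranges over all of $\mathcal{S}_n$ and the profile $a$ is invariant under every relabelling of the players, so the standard re-indexing of permutations applies; symmetry (S) is then immediate since (A) implies (S). For the null-player property (NP), if $i$ is a null player then replacing its vote by $\alpha$ changes nothing, so $v(a_{\pi_{<i}},\mathbf{1}_{\pi_{\ge i}})=v(a_{\pi_{\le i}},\mathbf{1}_{\pi_{> i}})$ and $v(a_{\pi_{<i}},\mathbf{0}_{\pi_{\ge i}})=v(a_{\pi_{\le i}},\mathbf{0}_{\pi_{> i}})$, making every summand vanish. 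Linearity (L) is clear because the defining expression is linear in $v$, and convexity (C) is merely its restriction to convex combinations; finally the transfer property (T) follows from (L) together with $x+y=\max\{x,y\}+\min\{x,y\}$, just as for $\Phi$. I do not expect a genuine obstacle: apart from the boundary cancellation in the telescope for (E), the entire proof is a routine transcription of the discrete argument to the point profile $a$.
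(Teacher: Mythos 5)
Your proof is correct, and it is essentially the paper's own argument: the paper gives no separate proof of Proposition~\ref{prop_generalized_measure_point}, but states that the parametric index is handled ``just as'' Proposition~\ref{prop_power_index_properties} and Proposition~\ref{prop_parametric_power_index}, which is exactly the transcription you carry out (nonnegativity of the two regrouped brackets for (P), the telescoping prefix sum for (E), re-indexing of permutations for (A), vanishing summands for (NP), and linearity of the defining expression for (L), (C), (T)).

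One remark in your write-up is wrong, however, even though it does not break the proof. You claim that the cancellation $g_\pi(n)=v(a)-v(a)=0$ ``holds exactly because all entries of $a$ coincide'' and that this is ``the only substantive point'' where the hypothesis $a=(\alpha,\dots,\alpha)$ enters. In fact $g_\pi(n)=v(a_{\pi|n},\cdot)-v(a_{\pi|n},\cdot)$ is a difference of two literally identical terms for \emph{any} $a\in[0,1]^n$, constant or not; efficiency, like (P), (NP), (T), (C), and (L), holds for arbitrary $a$. The constancy hypothesis is genuinely needed only for (A) and (S) --- precisely the step where you invoke invariance of the profile $a$ under relabelling of the players, so your proof is sound, but the logical role of the hypothesis is the reverse of what you state. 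To see that (S) really fails without constancy, take $n=2$, $v(x)=\min\{x_1,x_2\}$ and $a=(0,1)$: players $1$ and $2$ are equivalent in $v$, yet a direct computation gives $\Psi^a(v)=(1,0)$, while $\sum_i\Psi^a_i(v)=1$ still holds.
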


Again, there exist vectors $a\in[0,1]^n$ and interval simple games $v$ with $\Psi^a(v)\neq \Psi(v)$. Also the simplified formula
for $\Phi$ for $(j,k)$ simple games in Lemma~\ref{lemma_simplified_formula} can be mimicked for interval simple games and $\Psi$, see \cite{kurz2019axiomatization}.

\begin{proposition}
  \label{prop_formula_ssi_interval_sg}
  For every interval simple game $v$ with player set $N$ and every player $i\in N$ we have
  \begin{equation}
  \label{eq_ssi_interval_sg}
  \Psi_i(v)=\sum_{i\in S\subseteq N} \frac{(s-1)!(n-s)!}{n!}\cdot\left[C(v,S)-C(v,S\backslash\{i\})\right],
  \end{equation}
  where $C(v,T)=\int_{[0,1]^n} v(\mathbf{1}_T,x_{-T})-v(\mathbf{0}_T,x_{-T})\,\operatorname{d}x$ for all $T\subseteq N$.
\end{proposition}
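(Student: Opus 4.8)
The plan is to reproduce the proof of Lemma~\ref{lemma_simplified_formula} in the continuous setting, reading the $n$-fold integral in Equation~(\ref{eq_def_ssi_interval_sg}) coordinatewise and exploiting that, for a fixed ordering $\pi$, each bracketed difference depends only on a proper subset of the coordinates. First I would fix $\pi\in\mathcal{S}_n$ and treat the two differences separately. In the first one the players of $\pi_{\ge i}=\{j\in N:\pi(j)\ge\pi(i)\}$ are frozen at $\mathbf{1}$ or $\mathbf{0}$, so the integrand does not depend on the variables $x_j$ with $j\in\pi_{\ge i}$; integrating those coordinates over $[0,1]$ contributes a factor $1$, and since $\pi_{<i}=N\backslash\pi_{\ge i}$ the surviving integral is exactly $\int_{[0,1]^n}\bigl[v(\mathbf{1}_{\pi_{\ge i}},x_{-\pi_{\ge i}})-v(\mathbf{0}_{\pi_{\ge i}},x_{-\pi_{\ge i}})\bigr]\,\operatorname{d}x=C(v,\pi_{\ge i})$. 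The same reasoning applied to the second difference, whose frozen set is $\pi_{>i}=\{j\in N:\pi(j)>\pi(i)\}$, gives $C(v,\pi_{>i})$. Hence Equation~(\ref{eq_def_ssi_interval_sg}) collapses to $\Psi_i(v)=\tfrac{1}{n!}\sum_{\pi\in\mathcal{S}_n}\bigl[C(v,\pi_{\ge i})-C(v,\pi_{>i})\bigr]$.

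Next I would record that $\pi_{\ge i}$ always contains $i$ while $\pi_{>i}=\pi_{\ge i}\backslash\{i\}$, so every summand has the shape $C(v,S)-C(v,S\backslash\{i\})$ with $i\in S$. Exactly as in the passage from Equation~(\ref{eq_roll_call_simple_games_first}) to Equation~(\ref{eq_ssi_simple_games}), I would then condense the permutations producing the same set $S=\pi_{\ge i}$: if $i\in S$ and $|S|=s$, then $\pi_{\ge i}=S$ forces $\pi(i)=n-s+1$, the $s-1$ players of $S\backslash\{i\}$ fill the top positions in $(s-1)!$ ways and the $n-s$ players of $N\backslash S$ fill the bottom positions in $(n-s)!$ ways, yielding $(s-1)!(n-s)!$ permutations per set $S$. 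Substituting this count gives $\Psi_i(v)=\sum_{i\in S\subseteq N}\frac{(s-1)!(n-s)!}{n!}\bigl[C(v,S)-C(v,S\backslash\{i\})\bigr]$, which is Equation~(\ref{eq_ssi_interval_sg}).

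The only genuinely new point relative to the finite case is the interchange of the finite permutation sum with the integral and the requirement that all the integrals exist, and I expect this to be the main (though mild) obstacle. I would address it by noting that $v$ is bounded and monotone on $[0,1]^n$, so that $C(v,T)$ is well defined for every $T\subseteq N$ and Fubini's theorem legitimizes both the coordinatewise integration that produces the factor-$1$ collapse and the reorganization of $\sum_{\pi}$ into $\sum_{S}$. Everything else is the verbatim continuous analogue of Lemma~\ref{lemma_simplified_formula}; since the statement already appears in \cite{kurz2019axiomatization}, I would simply present the computation above.
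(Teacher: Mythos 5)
Your proof is correct and takes essentially the same route as the paper, which offers no standalone argument for this proposition but states that the proof of Lemma~\ref{lemma_simplified_formula} can be mimicked (pointing to \cite{kurz2019axiomatization}); that is exactly what you do, collapsing the coordinates frozen at $\mathbf{1}$ or $\mathbf{0}$ to get $C(v,\pi_{\ge i})-C(v,\pi_{>i})$ and then condensing the $(s-1)!(n-s)!$ permutations with $\pi_{\ge i}=S$. Your closing remark on boundedness, monotonicity, and Fubini addresses a well-definedness point the paper leaves implicit, and is a harmless addition.
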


This triggers:

\begin{definition} \label{Def aver-game}
  Let $v$ be an interval simple game on $N$. The \emph{average game} associated with $v$ and denoted by $\widehat{v}$ is defined via
\begin{equation} \label{Eq aver-TU interval sg}
\forall S\subseteq N,\, \widehat{v}(S)=\displaystyle\int_{I^n}{[v(\mathbf{1}_S\,,x_{-S})-v(\mathbf{0}_S\,,x_{-S})]dx}.
\end{equation}
\end{definition}

\begin{theorem} \label{Prop-Def-SSI-csg}
  For all every interval simple game $v$ on $N$ and for all $i\in N$,
  \begin{equation}\label{Eq-SSI}
   \Psi _i(v)=\sum_{i\in S\subseteq N}{\displaystyle\frac{(s-1)!(n-s)!}{n!}[\widehat{v}(S)-\widehat{v}(S\backslash \{i\}]}
  \end{equation}
\end{theorem}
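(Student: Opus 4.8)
The plan is to observe that this statement is an immediate corollary of Proposition~\ref{prop_formula_ssi_interval_sg}, once one notices that the average game $\widehat{v}$ of Definition~\ref{Def aver-game} is literally the same coalitional function as the $C(v,\cdot)$ appearing there.

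First I would place the two defining formulas side by side. Proposition~\ref{prop_formula_ssi_interval_sg} uses $C(v,T)=\int_{[0,1]^n}\big(v(\mathbf{1}_T,x_{-T})-v(\mathbf{0}_T,x_{-T})\big)\operatorname{d}x$, whereas Definition~\ref{Def aver-game} sets $\widehat{v}(S)=\int_{I^n}\big[v(\mathbf{1}_S,x_{-S})-v(\mathbf{0}_S,x_{-S})\big]\,dx$. Since $I=[0,1]$, these integrals are identical, so $\widehat{v}(T)=C(v,T)$ for every $T\subseteq N$. Substituting this equality into Equation~(\ref{eq_ssi_interval_sg}) turns it verbatim into Equation~(\ref{Eq-SSI}), which completes the proof.

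It is worth recording what this formula says structurally: its right-hand side is exactly the classical Shapley-value expression evaluated on the TU game $\widehat{v}$, so the theorem is the interval-simple-game counterpart of Theorem~\ref{thm_aver_form_SSI}, asserting that $\Psi(v)$ is the Shapley value of the average game $\widehat{v}$. There is essentially no obstacle here; the sole content is the bookkeeping identification of the two integrals. The only minor check is that $\widehat{v}$ really is a TU game---that $\widehat{v}(\emptyset)=0$---which holds because for $S=\emptyset$ the integrand $v(x)-v(x)$ vanishes identically; this is the integral analogue of Proposition~\ref{prop_average_game}(a), whose proof transfers by replacing the sums over $J^n$ with integrals over $I^n$.
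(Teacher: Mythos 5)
Your proposal is correct and matches the paper exactly: the paper itself gives no separate proof of Theorem~\ref{Prop-Def-SSI-csg}, precisely because it is the verbatim restatement of Proposition~\ref{prop_formula_ssi_interval_sg} once one identifies $C(v,\cdot)$ with the average game $\widehat{v}$ of Definition~\ref{Def aver-game}, which is the identification you make. Your additional check that $\widehat{v}(\emptyset)=0$ is the same observation the paper records in Proposition~\ref{prop average-game interval}(a).
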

In other words, for a given interval simple game $v$ the power distribution $\Psi(v)$ is given by the Shapley value of its average game $\hat{v}$.

As with $(j, k)$ simple games, two distinct interval simple games may have the same average game as illustrated in the following example.
\begin{example}\label{Exp-different jk-same-average}
  Consider the  interval simple games $u$ and $v$ defined on $N$ respectively for all $x\in [0,1]^n$ by : $u(x)=1$ if $x=\mathbf{1}$, and $u(x)=0$ otherwise; $v(x)=1$ if $x\neq \mathbf{0}$, and $v(x)=0$ otherwise. It is clear that, $u\neq v$.
But, Equation~(\ref{Eq aver-TU interval sg}) and  a simple calculation give $\widehat{u}(S)=\widehat{v}(S)=1$ if $S=N$ and $\widehat{u}(S)=\widehat{v}(S)=0$ otherwise.
\end{example}

We can also transfer Proposition~\ref{prop_average_game}, i.e., the average game operator preserves the following nice properties of interval simple games.
\begin{proposition}\label{prop average-game interval}
For all $v\in \mathcal{CSG}_n$, $\widehat{v}$ is a $\left[0,1\right]$-valued TU game on $N$ such that
\begin{itemize}
\item[(a)] $\widehat{v}$ is a TU game on $N$ that is $\left[0,1\right]$-valued and monotone;
\item[(b)] any null voter in $v$ is null player in  $\widehat{v}$;
\item[(c)] any two symmetric voters in $v$ are symmetric players in $\widehat{v}$;
\item[(d)] if $v=\sum_{t=1}^p\alpha_tv_t$ is a convex combination for some $v_1,\dots,v_p\in \mathcal{U}_n$ then $\widehat{v}=\sum_{t=1}^p\alpha_t\widehat{v_t}$.
\end{itemize}
\end{proposition}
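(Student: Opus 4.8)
The plan is to prove Proposition~\ref{prop average-game interval} by transferring the argument of Proposition~\ref{prop_average_game} almost verbatim, replacing the normalized finite average $\frac{1}{j^{n}(k-1)}\sum_{x\in J^n}$ appearing in Equation~(\ref{eq_aver_tu_j_k}) by the integral $\int_{I^n}\cdots\,\mathrm{d}x$ of Equation~(\ref{Eq aver-TU interval sg}). The only two structural facts about this averaging operator that the argument uses are that it is \emph{linear} in the integrand and that it is \emph{invariant under permutations} of the coordinate variables of $x$ — both of which hold for Lebesgue integration over the cube $I^n=[0,1]^n$ exactly as they did for the uniform sum over $J^n$. Hence each of the four assertions follows by inspecting how the corresponding property of $v$ passes through the integrand $v(\mathbf{1}_S,x_{-S})-v(\mathbf{0}_S,x_{-S})$.

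For part (a) I would first record the boundary values $\widehat v(\emptyset)=\int_{I^n}\!\bigl[v(x)-v(x)\bigr]\,\mathrm{d}x=0$ and $\widehat v(N)=\int_{I^n}\!\bigl[v(\mathbf{1})-v(\mathbf{0})\bigr]\,\mathrm{d}x=1-0=1$, so that $\widehat v$ is a TU game. For monotonicity and the $[0,1]$-bound, fix $S\subseteq T\subseteq N$; since $\mathbf{0}\le x\le\mathbf{1}$ and $v$ is monotone, passing from $S$ to $T$ raises the coordinates of the all-ones argument and lowers those of the all-zeros argument, giving $v(\mathbf{1}_S,x_{-S})\le v(\mathbf{1}_T,x_{-T})$ and $v(\mathbf{0}_S,x_{-S})\ge v(\mathbf{0}_T,x_{-T})$ pointwise in $x$. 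Integrating the resulting pointwise inequality between integrands yields $\widehat v(S)\le\widehat v(T)$, and specializing to $\emptyset\subseteq S\subseteq N$ gives $0\le\widehat v(S)\le 1$.

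Parts (b) and (d) are equally direct. If $i$ is null in $v$, then for every $S\subseteq N\setminus\{i\}$ the two integrands for $S\cup\{i\}$ and $S$ coincide pointwise, so $\widehat v(S\cup\{i\})=\widehat v(S)$ and $i$ is null in $\widehat v$; and linearity of the integral turns a convex combination $v=\sum_{t=1}^p\alpha_t v_t$ into $\widehat v=\sum_{t=1}^p\alpha_t\widehat{v_t}$ for each $S\subseteq N$.

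The only step that needs genuine (if mild) care is part (c), the preservation of equivalence, since here the integrand does not match pointwise but only after relabeling the players $i$ and $h$. For equivalent $i,h$ and $S\subseteq N\setminus\{i,h\}$, I would apply the defining identity $v(y)=v(\pi_{ih}y)$ together with the change of variables $x\mapsto\pi_{ih}x$ in the integral; because Lebesgue measure on $I^n$ is invariant under this coordinate swap, the integral defining $\widehat v(S\cup\{i\})$ is carried exactly onto the one defining $\widehat v(S\cup\{h\})$, whence $\widehat v(S\cup\{i\})=\widehat v(S\cup\{h\})$ and $i,h$ are equivalent in $\widehat v$. I expect this change-of-variables bookkeeping to be the main (and essentially the only) obstacle; everything else is a line-by-line translation of Proposition~\ref{prop_average_game} with sums replaced by integrals.
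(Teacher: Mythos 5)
Your proof is correct and takes exactly the approach the paper intends: the paper's own proof of this proposition consists solely of the remark that it is ``very similar to the one of Proposition~\ref{prop_average_game}'', i.e., precisely the line-by-line transfer you carry out, with the uniform sum over $J^n$ replaced by Lebesgue integration over $I^n$ and with the measure-invariance under the coordinate swap $\pi_{ih}$ handling part (c). Nothing further is needed.
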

\begin{proof}
  Very similar to the one of Proposition  \ref{prop_average_game}.
\end{proof}

From Theorem~\ref{Prop-Def-SSI-csg} we can directly conclude that $\Psi$ also satisfies \emph{Average Convexity} (AC), which is defined as in Definition~\ref{def_averagely_convex}.

For the remaining part of this section we introduce some further notation. For all $x\in I^n$, let $\mathbf{1}_x=\{i\in N,\, x_i=1\}$; and given a coalition $S$, let $C^S$ be the
interval simple game defined for all $x\in I^n$  by $C^S(x)=1$ if $S\subseteq \mathbf{1}_x$ and  $C^S(x)=0$ otherwise.

\begin{proposition} \label{Prop partic-csg} For all $T\in 2^N$,    $\widehat{C^S}=\gamma_S.$
\end{proposition}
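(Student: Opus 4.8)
The plan is to prove the equality of TU games $\widehat{C^S}=\gamma_S$ by fixing an arbitrary coalition $S$ and evaluating both sides at an arbitrary $T\subseteq N$, showing $\widehat{C^S}(T)=\gamma_S(T)$. By Definition~\ref{Def aver-game}, Equation~(\ref{Eq aver-TU interval sg}), this amounts to computing
$$
\widehat{C^S}(T)=\int_{I^n} C^S(\mathbf{1}_T,x_{-T})\,dx-\int_{I^n} C^S(\mathbf{0}_T,x_{-T})\,dx .
$$
The guiding observation is that $C^S(y)=1$ forces $y_i=1$ for every $i\in S$; hence whenever a coordinate $i\in S$ is left free (that is, $i\notin T$, so that it equals $x_i$ in both profiles), the requirement $x_i=1$ cuts out a subset of $I^n$ of Lebesgue measure zero. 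This is what makes the two integrals collapse to either $0$ or $1$.

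First I would treat the case $S\subseteq T$, where $\gamma_S(T)=1$. In the profile $(\mathbf{1}_T,x_{-T})$ every coordinate indexed by $S$ lies in $T$ and therefore equals $1$, so $C^S(\mathbf{1}_T,x_{-T})=1$ for all $x\in I^n$ and the first integral equals $1$. In the profile $(\mathbf{0}_T,x_{-T})$ every such coordinate (using $S\subseteq T$) equals $0$, so, assuming $S\neq\emptyset$, $C^S(\mathbf{0}_T,x_{-T})=0$ identically and the second integral vanishes. Thus $\widehat{C^S}(T)=1-0=1=\gamma_S(T)$.

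Next I would treat the case $S\not\subseteq T$, where $\gamma_S(T)=0$. Choosing some $i_0\in S\backslash T$, the $i_0$-th coordinate of both $(\mathbf{1}_T,x_{-T})$ and $(\mathbf{0}_T,x_{-T})$ equals $x_{i_0}$, so in either profile $C^S$ can only be nonzero on the set $\{x\in I^n:x_{i_0}=1\}$, which has Lebesgue measure zero. Therefore both integrals vanish and $\widehat{C^S}(T)=0=\gamma_S(T)$, completing the case distinction and hence the proof.

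I do not expect any real obstacle here: the computation is essentially a routine integration over the cube. The only point requiring care is the measure-zero argument for the coordinates outside $T$, which is precisely the continuous analogue of the point-veto computation in Proposition~\ref{prop_aver_veto_j_k} degenerating to the clean value $\gamma_S$ (the factors $\tfrac{j-a_i}{j}$ there become $0$ in the limit for a coordinate pinned to the top level). I would also keep the convention $S\neq\emptyset$ explicit, matching the definition of unanimity games, since $C^\emptyset\equiv 1$ would otherwise yield $\widehat{C^\emptyset}\equiv 0\neq\gamma_\emptyset$.
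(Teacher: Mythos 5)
Your proof is correct and follows essentially the same route as the paper's: the same case distinction $S\subseteq T$ versus $S\not\subseteq T$, with the integrals collapsing to $1$ and $0$ respectively. Your explicit measure-zero argument for a coordinate $i_0\in S\backslash T$ is exactly what the paper encodes by integrating over $[0,1)^n$ instead of $[0,1]^n$, and your remark that $S\neq\emptyset$ is needed (since $C^\emptyset\equiv 1$ gives $\widehat{C^\emptyset}\equiv 0\neq\gamma_\emptyset$) makes explicit a convention the paper leaves tacit.
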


\begin{proof}
Consider $S,T\subseteq N$.
       If $S\subseteq T$ then for all $x\in [0, 1]^n$, $S\subseteq T\subseteq \{i\in N, (\mathbf{1}_T\,,x_{-T})_i=1\}$ and $S\cap\{i\in N, (\mathbf{0}_T\,,x_{-T})_i=1\}=\emptyset$. Then by definition of $C^S$, $C^S(\mathbf{1}_T\,,x_{-T})=1$ and $C^S(\mathbf{0}_T\,,x_{-T})=0$. Therefore,
      \begin{align*}
        \widehat{C^{S}}(T)= & \int_{[0, 1[^n}{[C^S(\mathbf{1}_T,,x_{-T})-C^S(\mathbf{0}_T\,,x_{-T})]dx}=1=\gamma_S(T).
      \end{align*}

      Now assume that $S\nsubseteq T$. Let $x\in [0\, 1)^n$. Note that $\{i\in N, (\mathbf{1}_T\,,x_{-T})_i=1\}=T$ and $\{i\in N, (\mathbf{0}_T\,,x_{-T})_i=1\}=\emptyset$. Therefore,   $S\nsubseteq \{i\in N, (\mathbf{1}_T\,,x_{-T})_i=1\}$ and $S\nsubseteq \{i\in N, (\mathbf{0}_T\,,x_{-T})_i=1\}$. By the definition of $C^S$, it follows that $C^S(\mathbf{1}_T\,,x_{-T})=C^S(\mathbf{0}_T\,,x_{-T})=0$. Hence
      \begin{align*}
        \widehat{C^S}(T)= & \int_{[0, 1[^n}{[C^S(\mathbf{1}_T\,,x_{-T})-C^S(\mathbf{0}_T\,,x_{-T})]dx}=0=\gamma_S(T).
      \end{align*}
In both cases $\widehat{C^S}(T)=\gamma_S(T)$ for all $T\in 2^N$; that is $\widehat{C^S}=\gamma_S$.
\end{proof}

\begin{theorem}\label{Theo_Axiomatization-of SSI}
A power index $\Psi'$ for interval simple games satisfies (E), (S), (NP) and (AC) if and only if $\Psi'=\Psi$.
\end{theorem}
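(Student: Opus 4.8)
The plan is to mirror, essentially verbatim, the proof of Theorem~\ref{thm_charact_SSI_j_k}, exploiting that the interval setting supplies an even cleaner basis than the $(j,k)$ setting. For \emph{necessity}, I would recall that $\Psi$ satisfies (E), (S), and (NP) (as noted in the text, following \cite{kurz2018importance}), and that $\Psi$ satisfies (AC): by Theorem~\ref{Prop-Def-SSI-csg} the vector $\Psi(v)$ is the Shapley value of the average game $\widehat{v}$, and since the operator $v\mapsto\widehat{v}$ is linear (Proposition~\ref{prop average-game interval}(d)), any convex identity $\sum_t\alpha_t\widehat{u_t}=\sum_t\beta_t\widehat{v_t}$ is transported through the (linear) Shapley value to the corresponding identity on the $\Psi$-values.

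For \emph{sufficiency}, the first step is the interval analogue of Lemma~\ref{lemma_E_S_NP}: for every $S\in 2^N$ one has $\Psi'(C^S)=\Psi(C^S)$ whenever $\Psi'$ obeys (E), (S), (NP). This reduces to checking the structure of $C^S$: each player outside $S$ is a null player (changing an input $x_i$ with $i\notin S$ never alters whether $S\subseteq\mathbf{1}_x$), and any two players in $S$ are equivalent (the defining condition $S\subseteq\mathbf{1}_x$ is invariant under transposing two coordinates lying in $S$). Hence (E), (S), (NP) force $\Psi'_i(C^S)=\tfrac{1}{|S|}$ for $i\in S$ and $0$ otherwise, and the same values hold for $\Psi$.

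The second step is the decomposition, which is shorter than in the $(j,k)$ case: Proposition~\ref{Prop partic-csg} gives $\widehat{C^S}=\gamma_S$ outright, so $\bigl(\widehat{C^S}\bigr)_{S\in2^N}$ is literally the unanimity basis of the space of TU games. Given an arbitrary interval simple game $u$, its average game $\widehat{u}$ is a TU game (Proposition~\ref{prop average-game interval}(a)), hence there are reals $(x_S)$ with $\widehat{u}=\sum_S x_S\widehat{C^S}$. I would then split the index set into $E_1=\{S:x_S>0\}$ and $E_2=\{S:x_S<0\}$, set $\varpi=\sum_{S\in E_1}x_S$, and rearrange into the single convex-combination identity
\begin{equation*}
\frac{1}{\varpi}\widehat{u}+\sum_{S\in E_2}\frac{-x_S}{\varpi}\,\widehat{C^S}=\sum_{S\in E_1}\frac{x_S}{\varpi}\,\widehat{C^S},
\end{equation*}
exactly as in Equation~(\ref{eq u a tilde w s pos neg conv}). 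Applying (AC) to $\Psi'$, then replacing every $\Psi'(C^S)$ by $\Psi(C^S)$ via Step~1, and finally applying (AC) to $\Psi$, the two resulting identities differ only in the single term $\tfrac{1}{\varpi}\Psi'(u)$ versus $\tfrac{1}{\varpi}\Psi(u)$, which yields $\Psi'(u)=\Psi(u)$.

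The main obstacle is not conceptual but bookkeeping: one must confirm that the coefficients appearing on each side are nonnegative and sum to $1$, so that (AC) genuinely applies. This relies on $\widehat{C^S}(N)=\gamma_S(N)=1$ together with $\sum_S x_S=\widehat{u}(N)=1$ (which also gives $\varpi>0$ and $E_1\neq\emptyset$). Everything else is a faithful transcription of the $(j,k)$ argument, with Proposition~\ref{Prop partic-csg} replacing the longer basis computation (Propositions~\ref{prop_aver_veto_j_k}--\ref{prop_decompo_of_u_a_tilde}) used there.
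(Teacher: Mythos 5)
Your proposal is correct and follows essentially the same route as the paper's own proof: the paper likewise uses $\widehat{C^S}=\gamma_S$ (Proposition~\ref{Prop partic-csg}) to expand $\widehat{u}$ in the $\bigl(\widehat{C^S}\bigr)_{S\in 2^N}$ basis, rearranges into the same convex-combination identity with $\varpi=\sum_{S\in E_1}\alpha_S$, pins down $\Psi'(C^S)=\Psi(C^S)$ via (E), (S), (NP) exactly as in your Step~1, and closes by applying (AC) to both $\Psi'$ and $\Psi$. Your explicit verification that the coefficients are nonnegative and sum to one (using $\widehat{u}(N)=1$) is a point the paper leaves implicit, but it is the same argument.
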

\begin{proof}$\,$\\
\noindent
\underline{\emph{Necessity}}: We have already remarked that $\Psi$ satisfies (E), (S), (AC), and (NP).

\medskip

\noindent
\underline{\emph{Sufficiency}}: Let $\Psi'$ be a power index for interval simple games on $N$ that simultaneously satisfies (E), (S), (AC), and (NP). Consider
an interval simple game $u$. Note that $\widehat{u}$ is a TU game by Proposition~\ref{prop average-game interval}. Thus by Proposition~\ref{Prop partic-csg}, there
exists a collection of real numbers $(\alpha_S)_{S\in 2^N}$ such that
      \begin{equation}\label{Eq-decomposition-aver-for-csg}
        \widehat{u}=\sum_{S\in 2^{N}}{\alpha_S\cdot \widehat{C^S}}=\sum_{S\in E_1}{\alpha_S\cdot \widehat{C^S}}+\sum_{S\in E_2}{\alpha_S\cdot \widehat{C^S}}
      \end{equation}
      where $E_1=\{S\in 2^N:\alpha_S>0\}$ and $E_2=\{S\in 2^N:\alpha_S< 0\}$. Moreover,  $E_1\neq \emptyset$ since $\widehat{v}(N)=1$. We set
\begin{equation}\label{eq-sum-positif-coef-in-decompo-aver-csg}
\varpi=\sum_{S\in E_1}\alpha_S\cdot\widehat{C^S}(N)=\sum_{S\in E_1}\alpha_S>0.
   \end{equation}
It follows that
\begin{equation}\label{eq-Aver-convex-hypothesis-for-csg}
\frac{1}{\varpi}\widehat{u}+\sum_{S\in E_2}\frac{-\alpha_S}{\varpi}~\widehat{C^S}=\sum_{S\in E_1}\frac{\alpha_S}{\varpi}~\widehat{C^S}.
   \end{equation}
Since $(\ref{eq-Aver-convex-hypothesis-for-csg})$ is an equality among two convex combinations, then by (AC), we deduce that
\begin{equation}\label{eq u a tilde w s pos neg csg}
\frac{1}{\varpi}\Psi'(u)+\sum_{S\in E_2}\frac{-\alpha_S}{\varpi}~\Psi'(C^S)=\sum_{S\in E_1}\frac{\alpha_S}{\varpi}~\Psi'(C^S).
   \end{equation}

   Note that given $S\in 2^{N}$, all voters in $S$ are equivalent in $C^S$ while all voters outside $S$ are null players in $C^S$. Since $\Psi'$ and $\Psi$ satisfy (E),
   (S), and (NP), it follows that $\Psi'(C^S)=\Psi(C^S)$. Thus,
\begin{equation}\label{eq u a tilde w s pos neg 3}
\frac{1}{\varpi}\Psi'(u)+\sum_{S\in E_2}\frac{-\alpha_S}{\varpi}~\Psi(C^S)=\sum_{S\in E_1}\frac{\alpha_S}{\varpi}~\Psi(C^S).
   \end{equation}
Since $\Psi$ also satisfies (AC), we get
\begin{equation}\label{eq u a tilde w s pos neg 4}
\frac{1}{\varpi}\Psi'(u)+\sum_{S\in E_2}\frac{-\alpha_S}{\varpi}~\Psi(C^S)=\frac{1}{\varpi}\Psi(u)+\sum_{S\in E_2}\frac{-\alpha_S}{\varpi}~\Psi(C^S).
   \end{equation}
Hence $\Psi'(u)=\Psi(u)$, which proves that $\Psi'=\Psi$.
\end{proof}

\begin{proposition}
 The four axioms in Theorem~\ref{Theo_Axiomatization-of SSI} are independent.
\end{proposition}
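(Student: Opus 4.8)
The plan is to mirror the independence proof for the $(j,k)$ case that precedes Section~\ref{sec_axiomatization_interval_simple_games}: for each of the four axioms (E), (S), (NP), (AC) I would exhibit a power index on interval simple games that satisfies the remaining three but violates the chosen one. Two of the four indices are immediate. The index $2\cdot\Psi$ inherits (NP), (S), and (AC) from $\Psi$ but doubles every component, so it fails (E). Writing $\operatorname{ED}$ for the equal division index $v\mapsto(1/n,\dots,1/n)$, the index $\tfrac12\Psi+\tfrac12\operatorname{ED}$ is efficient, symmetric, and---being a convex combination of two indices satisfying the linear condition (AC)---again averagely convex, yet it assigns $\tfrac1{2n}>0$ to null players and thus violates (NP). For the latter one only has to note that $\operatorname{ED}$ itself trivially satisfies (E), (S), and (AC), the last because $\operatorname{ED}$ ignores the game entirely and $\sum_t\alpha_t=\sum_t\beta_t=1$.

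For the failure of (AC), I would turn to the parametric family $\Psi^a$ of Proposition~\ref{prop_generalized_measure_point}, which already satisfies (E), (S), and (NP), and show that a suitable member violates (AC), exactly as Proposition~\ref{prop_counter_example2} does in the $(j,k)$ setting. Concretely I would take the point-veto interval simple game $u^a$ with veto vector $a=(1/3,2/3,0,\dots,0)$, the continuous analogue of $b=(1,j-1,0,\dots,0)$. Using the interval analogue of Proposition~\ref{prop_aver_veto_j_k}, namely $\widehat{u^a}(S)=\prod_{i\in N\setminus S}(1-a_i)$ when $S\cap N^a\neq\emptyset$ and $\widehat{u^a}(S)=0$ otherwise, one checks that $\widehat{u^a}=\tfrac13\gamma_{\{1\}}+\tfrac23\gamma_{\{2\}}=\tfrac13\widehat{C^{\{1\}}}+\tfrac23\widehat{C^{\{2\}}}$ via Proposition~\ref{Prop partic-csg}. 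Axiom (AC) would then force the value $\tfrac13\Psi^a(C^{\{1\}})+\tfrac23\Psi^a(C^{\{2\}})=(1/3,2/3,0,\dots,0)$, whereas a direct evaluation of $\Psi^a$ on $u^a$ for the choice $\alpha=1/2$ gives $(0,1,0,\dots,0)$. The discrepancy witnesses the failure of (AC). I expect this step to be the main obstacle, since it is the only one requiring a genuine computation: deriving the interval veto-average formula and then evaluating the parametric index directly from Proposition~\ref{prop_generalized_measure_point} on the same game.

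Finally, for the failure of (S) I would exploit that, by Proposition~\ref{Prop partic-csg}, the family $(\widehat{C^S})_{S\in 2^N}=(\gamma_S)_{S\in 2^N}$ is a basis of the space of TU games on $N$. Hence every interval simple game $u$ has a unique expansion $\widehat{u}=\sum_{S\in 2^N}x_S^u\,\widehat{C^S}$, and I would define $\Psi'(u)=\sum_{S}x_S^u\,\Psi'(C^S)$ after prescribing the values on the basic games: $\Psi'(C^S)=\Psi(C^S)$ for all $S\neq N$, while for $S=N$, fixing some $i_0\in N$, I set $\Psi'_{i_0}(C^N)=\tfrac{2}{n+1}$ and $\Psi'_i(C^N)=\tfrac1{n+1}$ for $i\neq i_0$. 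By construction $\Psi'$ is additive over average-game decompositions and hence satisfies (AC); efficiency holds on each basic game, the $S=N$ values summing to $\tfrac{2}{n+1}+(n-1)\tfrac1{n+1}=1$, and therefore $\sum_i\Psi'_i(u)=\sum_S x_S^u\cdot 1=\widehat{u}(N)=1$ for all $u$; and (NP) follows because a null player $i$ of $u$ has vanishing coefficients $x_S^u$ on every $S\ni i$, while $\Psi'_i(C^S)=0$ whenever $i\notin S$. But all players are equivalent in $C^N$, so the unequal assignment there breaks (S). Collecting the four indices establishes the independence of the axioms.
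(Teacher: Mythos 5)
Your proposal is correct and structurally identical to the paper's proof: it uses the same four indices, namely $2\cdot\Psi$ against (E), $\tfrac{1}{2}\Psi+\tfrac{1}{2}\operatorname{ED}$ against (NP), a parametric index $\Psi^a$ against (AC), and the basis-expansion index that redistributes the value on $C^N$ against (S). The one place you genuinely diverge is the (AC) counterexample. The paper argues indirectly: it cites \cite{kurz2019axiomatization} for the existence of a parameter $\mathbf{a}$ with $\Psi^{\mathbf{a}}\neq\Psi$ and then invokes Theorem~\ref{Theo_Axiomatization-of SSI} itself to conclude that $\Psi^{\mathbf{a}}$ must violate (AC). You instead carry out the direct computation that the paper only mentions parenthetically as an easy adjustment of Proposition~\ref{prop_counter_example2}: the interval point-veto game $u^b$ with $b=(1/3,2/3,0,\dots,0)$ has $\widehat{u^b}=\tfrac{1}{3}\widehat{C^{\{1\}}}+\tfrac{2}{3}\widehat{C^{\{2\}}}$ (your average-game formula for interval point-veto games is the correct continuous analogue of Proposition~\ref{prop_aver_veto_j_k}), while a direct evaluation with $\alpha=\tfrac{1}{2}$ gives $\Psi^a(u^b)=(0,1,0,\dots,0)$, since the report $\tfrac{1}{2}$ meets player $1$'s threshold $\tfrac{1}{3}$ but not player $2$'s threshold $\tfrac{2}{3}$, so player $2$ is pivotal in every ordering; this all checks out. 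Your route is self-contained at the price of a computation, whereas the paper's is shorter but leans on an external result and on the very theorem whose axioms are being separated. Your explicit verifications for the fourth index --- (AC) via uniqueness of the expansion in the basis $\left(\widehat{C^S}\right)_{S}=\left(\gamma_S\right)_S$, (E) via $\sum_S x_S^u=\widehat{u}(N)=1$, and (NP) via vanishing coefficients $x_S^u$ for $S\ni i$ --- fill in what the paper dismisses as ``easily checked''. Two minor points: you overload the symbol $a$ as both the veto vector and the parameter of $\Psi^a$, which should be disambiguated, and breaking (S) alone requires $n\ge 2$, a hypothesis neither you nor the paper states.
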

\begin{proof}$\,$\\
\begin{itemize}
  \item The power index $2\cdot\Psi$ satisfies (NP), (S), (AC), but not (E).
  \item Denote by $\operatorname{ED}$ the \emph{equal division} power index which assigns $\tfrac{1}{n}$ to
        each player for every interval simple game. Then the power index $\tfrac{1}{2}\cdot\Psi + \tfrac{1}{2}\cdot\operatorname{ED}$
        satisfies (E), (S) and (AC), but not (NP).
  \item In Proposition~\ref{prop_generalized_measure_point} (c.f.~\cite[Proposition 4]{kurz2019axiomatization}) we have stated a parametric classes of power indices
        for interval simple games that satisfy (E), (S), and (NP). In \cite{kurz2019axiomatization} it was also proved that there is at least one parameter $\mathbf{a}$ for which 
        the parameterized index $\Psi^{\mathbf{a}}\neq \Psi$.  Thus, by Theorem~\ref{Theo_Axiomatization-of SSI} we can conclude that $\Psi^{\mathbf{a}}$ does not satisfies (AC).  
        (Also Proposition~\ref{prop_counter_example2} for $(j,k)$ simple games can be adjusted easily.)
  \item Note that by Proposition~\ref{Prop partic-csg} the set $\left(\widehat{C^S}\right)_{S\in 2^N}$ is a basis of the vector space of all TU games on $N$. Thus, given an interval 
        simple game $u$, there exists a unique collection of real numbers $\left(y_S^u\right)_{S\in 2^N}$ such that
  \begin{equation}\label{eq u a tilde w s bis 1}
  \widehat{u}=\sum_{S\in 2^N}y_S^u\widehat{C^S}.
  \end{equation}
Consider some $i_0\in N$ and set 
  \begin{equation}\label{eq u a tilde w s bis 2}
 F(u)=\sum_{S\in 2^N}y_S^u\cdot F\!\left(C^S\right).
  \end{equation}
  For each $S\in 2^N\backslash\{N\}$  we set $F_{i}\!\left(C^{S}\right)=\Phi\!\left(C^S\right)$. For $S=N$ we set $F_i\!\left(C^N\right)=\displaystyle\frac{2}{n+1}$ if $i=i_0$ and  
  $F_i\!\left(C^{N}\right)=\displaystyle\frac{1}{n+1}$ otherwise. We can easily check that $F$ satisfies (E), (NP), (AC), but not (S).
\end{itemize}
This proves that the four axioms in Theorem~\ref{Theo_Axiomatization-of SSI} are independent.
\end{proof}
 
\section*{Acknowledgment}
Hilaire Touyem benefits from a financial support of the CETIC (Centre d'Excellence Africain en
Technologies de l'Information et de la Communication) Project of the University of
Yaounde~I.


\end{document}